\newcommand{\x}{\mathbf{x}}
\newcommand{\y}{\mathbf{y}}
\newcommand{\tn}{\textnormal}
\newcommand{\bra}[1]{\mbox{$\langle #1|$}}
\newcommand{\ket}[1]{\mbox{$|#1\rangle$}}
\newcommand{\braket}[2]{\mbox{$\langle #1|#2\rangle$}}
\DeclareMathOperator{\Tr}{Tr}
\newcommand{\NP}{{\sf{NP}}}
\renewcommand{\P}{{\sf{P}}}
\newcommand{\be}{\begin{equation}}
\newcommand{\ee}{\end{equation}}
\newtheorem{theorem}{Theorem}
\newtheorem{lemma}[theorem]{Lemma}
\newtheorem{corollary}[theorem]{Corollary}
\newtheorem{remark}[theorem]{Remark}
\newtheorem{definition}[theorem]{Definition}
\newtheorem{example}[theorem]{Example}
\newtheorem{proposition}[theorem]{Proposition}
\definecolor{dred}{rgb}{.8,0.2,.2}
\definecolor{ddred}{rgb}{.8,0.5,.5}
\definecolor{dblue}{rgb}{.2,0.2,.8}
\def\1#1{{\bf #1}} 
\def\2#1{{\cal #1}}
\def\3#1{{\sl #1}} 
\def\4#1{{\tt #1}}
\def\5#1{{\sf #1}}
\def\6#1{{\mathfrak #1}}
\def\7#1{{\mathbb #1}}
\newcommand{\CC}{\mathcal{C}}
\g@addto@macro{\endabstract}{\@setabstract}
\newcommand{\authorfootnotes}{\renewcommand\thefootnote{\@fnsymbol\c@footnote}}%
\begin{document}
\title{Tensor Network Contractions for \#SAT}
\author[Jacob D.~Biamonte]{Jacob D.~Biamonte\textsuperscript{1}
}

\author[Jason Morton]{Jason Morton\textsuperscript{2}
}
\author[Jacob Turner]{Jacob Turner\textsuperscript{1,2}
}
 \maketitle
 \begin{center}
 \textsuperscript{1}\small{ISI Foundation, Via Alassio 11/c 10126 Torino Italy}
\par
   \textsuperscript{2}\small{Department of Mathematics, The Pennsylvania State
University, University Park, PA 16801 USA}\par \bigskip
\end{center}

\begin{abstract}
The computational cost of counting the number of solutions satisfying a
 Boolean formula, which is a problem instance of \#SAT, has proven subtle to quantify. Even when finding individual satisfying solutions is computationally easy (e.g. 2-SAT, which is in $\P$), determining the number of solutions is \#$\P$-hard. Recently, computational methods simulating quantum systems experienced advancements due to the development of tensor network algorithms and associated quantum physics-inspired techniques.  By these methods, we give an algorithm using an axiomatic tensor contraction language for $n$-variable \#SAT instances with complexity  $O((g+cd)^{O(1)} 2^c)$ where $c$ is the number of COPY-tensors, $g$ is the number of gates, and $d$ is the maximal degree of any COPY-tensor.  Thus, counting problems can be solved efficiently when their tensor network expression has at most $O(\log c)$ COPY-tensors and polynomial fan-out. This framework also admits an intuitive proof of a variant of the Tovey conjecture (the r,1-SAT instance of the Dubois-Tovey theorem). This study increases the theory, expressiveness and application of tensor based algorithmic tools and provides an alternative
insight on these problems which have a long history in statistical physics and
computer science. \end{abstract}

\section{Introduction}
Given any property used as a measure of complexity, there are two natural questions: How does this property affect the computation of the problem and how easy is it to determine the value of said property. In this paper, we discuss several such properties and address these two questions with regard to counting K-SAT problems. These properties include 
\begin{enumerate}
\item Bipartitions of tensor networks,
 \item Tensor networks with the structure of a tree,
 \item The fan-out of the variables,
 \item Treewidth, branchwidth, and clique-width, and
 \item  R\'enyi entropy of tensor networks.
\end{enumerate}
 We measure the fan-out of the variables by the number of
COPY-tensors in the tensor network. The resulting algorithm we describe gives an upper bound dependent on
these tensors.
 We also discuss the R\'enyi entropy as a measure of complexity, especially with
regard to the fact that it may not be defined or easy to find.\\

 \noindent\textbf{Background.}
Satisfiability problems and
counting problems have several physical analogues and have had a long and
fruitful history intersecting physics \cite{Barahona,Kirkpatrick}. 
In 1982 Barahona proved that finding the lowest energy state of Ising
spins acting on a 2D lattice is $\NP$-hard \cite{Barahona}. Other seminal
results include the
concepts of algorithmic simulated annealing \cite{Kirkpatrick} which has long been applied to K-SAT and related
problems. 

The K-SAT problem on $n$ variables is formed from a conjunction of $c$ clauses. 
Each clause is the disjunction of $K$ Boolean variables (appearing in
complimented or uncomplimented form). Several celebrated results have shown that the
ratio $c/n$ is an indicator of what appears to be an algorithmic phase
transition \cite{monasson1999determining}. 

Related studies have also unveiled indicators of such `critical
behavior' in other satisfiability problems \cite{goerdt1996threshold,kirkpatrick1994critical}. We discuss how the algorithm suggested in Theorem \ref{thm:upperbound} compares with treewidth based algorithms, especially in the critical case when the clause to variable ratio is close to one.

Further subtleties arise when we consider the counting versions of these
problems (\#2-SAT, \#3-SAT, or \#SAT in general). 
This version of the problem asks for the number of satisfying bitstrings of a given function $f$. 
For example, although 2-SAT is in $\P$, \#2-SAT
is $\#\P$-complete \cite{valiant1979complexity}. 

Tensor network methods are a collection of techniques to model and
reason about multilinear maps.  These methods form the backbone of tensor
network contraction algorithms to model quantum systems \cite{verstraete08,JCJ10, 2011JSP...145..891E, 2013arXiv1308.3318E} and
are used in the abstract tensor languages \cite{Penrose} to represent
channels, maps, states and processes appearing in quantum
theory~\cite{2005PhRvA..71d2337G, 2006PhRvA..73e2309G, 2007PhRvL..98v0503G,
CTNS}. 

There is body of recent work which has applied tensor
network methods appearing in physics to solve combinatorial problems appearing
in computer science \cite{JBCJ13, CTNS, CM12,
2013arXiv1302.1932M, KGE14, 2012PhRvA..86c0301M, wolf2011problems}. 
Contracting a network to determine the norm of a \emph{Boolean tensor network state} is equivalent to
counting the number of satisfying solutions of a Boolean formula. Thus in
general, tensor contraction is \#$\P$-hard \cite{Damm03}.

Boolean tensor network states have appeared in several
recent studies.  This includes the recent identification of several undecidable 
problems in tensor network states \cite{2012PhRvL.108z0501E, 
wolf2011problems, KGE14, 2012PhRvA..86c0301M}.  The use of Boolean states have
also suggested the use of
entropy as a measure of computational complexity \cite{CM13} as well as several
other interesting applications \cite{2008PhRvA..77e2331B, 2012JPhA...45a5309D, 2012EL.....9957004W, AL13, 2014arXiv1406.5279G}.

Relationships between multilinear maps and counting problems have long been known in computer science and physics \cite{valiant1979complexity2,Penrose}.  Several alternative approaches to the counting problem based on tensor network
methods have recently appeared \cite{CM12, GL12, 2014arXiv1406.5279G} as well as the
work done by the present authors \cite{JBCJ13, 2013arXiv1302.1932M}. Any problem in $\#\mathbf{\mathsf{P}}$ can be expressed as a \#SAT problem, which we express as a tensor network in a standard way \cite{JBCJ13}. In this paper, we exploit the expression of these problems as multilinear maps to give an algorithm for solving \#SAT problems based on the fan-out of the variables. If the fan-out is small, the algorithm is polynomial. We argue that this is a different measure than tree-width, to which it appears similar.
\newline 

\noindent {\bf Results.} 
Our contribution includes the discussion of tensor network properties and their effect on the
computational complexity of  tensor contraction. We use the fan-out to classify the complexity of an algorithm we propose. We discuss the use of R{\'e}nyi entropy as a measure of complexity. We also
prove several other results related to counting by tensor contraction.  
Theorem \ref{thm:rof} gives a an alternative proof
for the $r,1$-SAT case of the Tovey-Dubois theorem, proving that all read-once
formulas are satisfiable. We propose a tensor contraction algorithm based on the
number of COPY-tensors in the network and in Theorem \ref{thm:upperbound} give
the running time. This leads to a class of efficiently solvable \#SAT instances,
outlined in Corollary \ref{cor:efficientSAT}. We compare this to other algorithms based on graph theoretic measures, predominantly those defined in terms of treewidth. We also discuss the difficulty of
determining if R{\'e}nyi entropy is defined in Theorems \ref{thm:renyi1} and
\ref{thm:renyi2}.
\newline

\noindent {\bf Structure.} 
In Section \ref{sec:graphical} we recall the basic ideas
connecting
counting, satisfiability, and related physical problems. We then cast these
problems into the language of tensor network contraction.
In Section \ref{sec:sat-of-certain}, we prove satisfiability of read-once formulas
and then in Section \ref{sec:algorithm}, we present our tensor contraction
algorithm and compare it to other known algorithms. Lastly, in Section \ref{sec:renyi}, we discuss how the computational complexity relates to the R{\'e}nyi entropies of a Boolean state concluding with a discussion in Section \ref{sec:discussion}.  In Appendix \ref{sec:tensor-defs} we have included
the tensor network definitions required for the proofs appearing in this paper.

\section{Counting by tensor contraction}\label{sec:graphical}

 We first recall how
satisfiability and counting are connected to familiar concepts in
physics. Further, determining properties of many
physical systems
can be rephrased in terms of satisfiability and counting problems.  
%

Each $n$-variable 3-SAT problem can be associated to a $2^n\times 2^n$ Hamiltonian $H$ indexed by variable assignments.
Let
\begin{equation*}
 \psi_f := \sum_\x f(\x) \ket{\x} 
\end{equation*}
and 
\begin{equation*}
 \psi^\perp_f := \sum_\x (1-f(\x)) \ket{\x} = \sum_\x \overline{f}(\x) \ket{\x} = 
\ket{+}^{\otimes n} - \psi_f,
\end{equation*}
where $\ket{+}:= |0\rangle+|1\rangle$ and $\braket{\psi_f}{\psi_f^\perp} = 0$.   Then for $H
:= \ket{\psi^\perp_f}\bra{\psi_f^\perp}$, (note that for physical
considerations, we implicitly assume that $f$ is satisfiable or otherwise $H$
will be the zero matrix) we have that the eigenvectors of $H$ correspond to non-satisfying bitstrings.

Such observations have
been central to understanding the computational complexity of determining
physical properties of many naturally occurring spin
lattice problems \cite{Barahona}.  A typical problem includes determining a
ground state wave function of an Ising spin system (such as Kirpatrick et al.'s
simulated annealing \cite{Kirkpatrick}). This approach also leads to the
definition of the QMA complexity class and the local Hamiltonian problem, which
is QMA-complete \cite{kempe2006complexity}.

Then we note  
\begin{equation*}
 \min_{\psi} \braket{\psi}{H(\psi)} \geq 0
\end{equation*}
where $\psi$ is a bit string, and equality implies that $\psi$ encodes
a satisfying assignment to $f$. The ground space of $H$ is spanned by all
satisfying bitstrings, which is degenerate when there are multiple satisfying assignments. The canonical partition function 
\begin{equation*} 
 Z(H, \beta) := e^{-\beta H}
\end{equation*}
at the $\beta \rightarrow \infty$ limit (corresponding to the temperature $T \rightarrow 0$) in general has a degenerate ground state.  Therefore the number of satisfying assignments of $f$
(which we denote as $\#f$) satisfies the following bound:
\begin{equation*}
 0\leq \#f \leq  \text{Tr}\; Z(H, \beta).
\end{equation*}
Under this embedding the upper-bound holds with equality in the low-temperature limit of
the partition function, i.e.

\begin{equation}\label{eqn:lowtemp-count}
 \text{Tr} \lim_{\beta \rightarrow \infty} Z(H, \beta) = \#f.
\end{equation}
This can then be translated into a tensor contraction problem. Let 
\begin{equation*}
 N := 2^n = \#f + \#\overline{f}
\end{equation*}
which can be alternatively expressed as 
\begin{equation*}
 N = \bra{+}^{\otimes n}\ket{\psi_f} + \bra{+}^{\otimes n}\ket{\psi^\perp_f}.
\end{equation*}
Then evaluating the limit in \eqref{eqn:lowtemp-count} is equivalent to evaluating
the quantity $\bra{+}^{\otimes n}\ket{\psi_f}$, which we show has
desirable properties when expressed as a tensor contraction.  

As $\psi_f$ can be viewed as the state of a physical system, we can consider how
properties of tensor networks arise from motivations in physics. Penrose related the value of a tensor network contraction to its physical realizability:

\begin{theorem}[Penrose, \cite{Penrose67}]\label{thm:penrose}
 The norm of a spin network vanishes if and only if the physical situation it represents is
forbidden by the rules of quantum mechanics. 
\end{theorem}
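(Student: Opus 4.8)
The plan is to recast the statement in the representation theory of $\mathrm{SU}(2)$, which is the natural home of Penrose's spin networks. A spin network is a (trivalent) graph whose edges carry irreducible representations $V_j$ indexed by a spin $j$, and whose vertices carry intertwiners; its \emph{norm} is the scalar obtained by fully contracting the associated tensor network, in exactly the sense of contraction used throughout this paper. Under this dictionary, the phrase ``the physical situation is forbidden by the rules of quantum mechanics'' means precisely that the angular-momentum selection rules fail somewhere: at some vertex joining edges of spin $a,b,c$ the Clebsch--Gordan admissibility conditions---the triangle inequality $|a-b|\le c\le a+b$ together with the requirement that $a+b+c$ be an integer---do not hold, so that angular momentum cannot be conserved there. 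With this translation fixed, I would prove the two implications separately.

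For ``forbidden $\Rightarrow$ norm $=0$'' I would argue locally. If a vertex is inadmissible then its space of intertwiners is zero-dimensional: the Clebsch--Gordan rule $V_a\otimes V_b\cong\bigoplus_c V_c$ shows that $V_c$ does not occur in $V_a\otimes V_b$ when admissibility fails, so the only admissible tensor at that vertex is the zero tensor. By multilinearity of contraction, a single zero tensor forces the whole network to evaluate to zero. This is the conservation-law half and is straightforward.

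The substance is the converse, ``norm $=0 \Rightarrow$ forbidden,'' equivalently ``all vertices admissible $\Rightarrow$ norm $\neq 0$.'' Here I would pass to the binor (Temperley--Lieb) calculus: resolve each spin-$j$ edge into $2j$ strands closed up by the Jones--Wenzl antisymmetrizer and evaluate the closed diagram. Using the recoupling/fusion moves, any closed admissible network reduces to a product of elementary \emph{theta} symbols and quantum-dimension factors, each of which is a ratio of factorials that is manifestly nonzero precisely under the admissibility conditions above; an induction that peels off one admissible vertex at a time then yields a nonzero value.

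The main obstacle I anticipate is ruling out accidental global cancellation in this reverse direction: although every local theta factor is nonzero, the full contraction is an alternating sum over internal strand matchings, so nonvanishing is not automatic from the nonvanishing of the pieces. I would control this either by organizing the induction so that at each step an admissible edge is removed without introducing sign cancellation, keeping the running value a genuine nonzero product, or by choosing a real orthonormal basis of each intertwiner space in which the squared norm becomes a sum of nonnegative terms that is not identically zero. Establishing that no such cancellation occurs is the crux on which the ``only if'' direction rests.
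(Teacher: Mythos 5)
There is a genuine gap, and it starts with the translation step. The paper does not prove this statement at all: it is imported verbatim from Penrose's 1967 paper as a cited result, and its role here is interpretive --- ``forbidden by the rules of quantum mechanics'' means the physical situation has probability amplitude zero, so the theorem is essentially the Born-rule reading of a tensor contraction as an amplitude. Your dictionary, which equates ``forbidden'' with failure of the Clebsch--Gordan admissibility conditions at some vertex, does not capture the way the theorem is actually used in this paper. The paper's own illustrations are the Bell-state contraction $\braket{01}{\Phi^+}=0$ and the grandfather-paradox network $(\bra{00}+\bra{11})(\ket{01}+\ket{10})=0$: both vanish and both are ``forbidden,'' yet neither contains an inadmissible $\mathrm{SU}(2)$ vertex. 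So your forward direction, while correct as far as it goes, proves a strictly narrower statement than the one being claimed.

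The converse is worse: under your formulation it is false, so the ``crux'' you identify cannot be closed. Closed spin networks in which every vertex is admissible can still evaluate to zero --- this is exactly the phenomenon of accidental (non-trivial) zeros of $6j$ and higher recoupling coefficients, which occur for admissible entries. Hence ``all vertices admissible $\Rightarrow$ evaluation $\neq 0$'' is not a theorem, and no organization of the induction over theta symbols will rescue it. Your fallback --- choosing a real orthonormal basis so that the squared norm $\braket{\psi}{\psi}$ is a sum of nonnegative terms --- does work, but it proves only the tautology that $\braket{\psi}{\psi}=0$ iff $\psi=0$, i.e., iff every amplitude vanishes; at that point the theorem has collapsed back into the probabilistic interpretation with no representation-theoretic content, which is precisely the form in which the paper (following Penrose) asserts it without proof.
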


\begin{example}[Example of Penrose's theorem]
 Consider a Bell state $\Phi^+ = \ket{00}+\ket{11}$.  The amplitude of the first party measuring $\ket{0}$ followed by the second party measuring $\ket{1}$ is zero. This vanishing tensor network contraction is given by $\braket{01}{\Phi^+}$.
\end{example}

Given a state $\ket{\psi}$, formed by a network of connected tensors, if $\braket{\psi}{\psi}$ vanishes, the network necessarily represents a non-physical quantum state by Penrose's theorem. In particular, the following example is a famous instance of non-physicality.

\begin{example}[Grandfather Paradox, \cite{2012PhRvA..86c0301M}]\label{ex:gfp}
   Consider the tensor $\psi=|01\rangle+|10\rangle$
contracted in the following network:
\begin{center}
\begin{tikzpicture}
 \draw[thick] (1.5,-.25) rectangle (2.5,-1.75);
\draw (2, -1) node{\textbf{$\psi$}};
\draw[thick] (1,-.5) arc (90:270:.5);
\draw[thick] (1,-.5) -- (1.5,-.5);
\draw[thick] (1,-1.5) -- (1.5,-1.5);
\end{tikzpicture}
\end{center}
The value of this network contraction is 
$$(\langle00|+\langle11|)(|01\rangle+|10\rangle)$$
$$=\langle00|01\rangle+\langle00|10\rangle+\langle11|01\rangle+\langle11|10\rangle=0$$
 This is the Grandfather
paradox: if you go back in time to kill your Grandfather, you could not have
been born to do the deed. Since the value of this tensor network is zero,
Theorem \ref{thm:penrose} tells us that this is indeed a paradox and cannot be realized
physically.
\end{example}

\subsection{Rephrasing \#SAT as a tensor network}\label{subsec:rephrasing}
We describe a common way to write a Boolean satisfiability problem as a tensor network \cite{JBCJ13}. Suppose we are given a
SAT formula $f$ and we wish to express it as a tensor $\psi_f$. Let $x_1,\dots,x_n$ be the variables appearing in this formula. We
have an open wire for each variable $x_i$. If a variable $x_i$ appears in $k$
clauses, we make $k$ copies of $x_i$ via the COPY-tensor $|0\rangle\langle
0|^{\otimes k}+|1\rangle\langle1|^{\otimes k}$. We call $k$ the \emph{degree} of
the COPY-tensor. It is depicted as a solid black dot in a tensor network.

 A Boolean gate (or clause) $\varphi:\{0,1\}^m\to\{0,1\}$ is expressed as the tensor
$$\sum_{\x\in\{0,1\}^m}{|\x\rangle\langle\varphi(\x)|}.$$ Then the COPY-tensor associated to each variable is connected to each tensor representing a clause which that variable appears in. This gives a tensor network with $n$ dangling wires corresponding to the variables and $c$ dangling wires, one for each clause. Lastly, we use a degree $c$ COPY-tensor to combine the wires of the clauses into a single output wire. The entire network can then be contracted to the tensor $\tilde{\psi}_f=\sum_{\x}{|x\rangle\langle f(\x)|}$ Then we define $\psi_f:=\tilde{\psi}_f|1\rangle$ and we call such a tensor a \emph{Boolean state}. 

\begin{remark}[Counting SAT solutions]\label{theorem:3-SAT}
Let $f$ be a SAT instance. Then the standard two-norm length squared of the corresponding Boolean state $\psi_f$ gives the number of satisfying assignments of the problem instance. 
\end{remark}

 The quantum state takes the form 
 \be
 \psi_f = \sum_{\x} \ket{\x}\braket{f(\x)}{1} = \sum_{\x} f(\x) \ket{\x}.
 \ee 
 We calculate the inner product of this state with itself viz 
 \be 
 ||\psi_f||^2=\sum_{\x, \y} f(\x) f(\y) \langle \x, \y\rangle = \sum_\x f(\x),
 \ee 
 which gives the number of satisfying inputs.  This follows since $f(\x)f(\y)=\delta_{\x\y}$.   
 We note that for Boolean states, the square of the two-norm
in fact equals the one-norm.

  We note that solving the counting problem for general
formula is known to be {\sf \#P}-complete \cite{valiant1979complexity}. The condition 
 \be 
 ||\psi_f|| > 0
 \ee 
 implies that the SAT instance $f$ has a satisfying assignment.  Determining if this condition holds for general Boolean states is 
 an {\sf NP}-complete decision problem.

\begin{figure}
 \centering
 \includegraphics[scale=1.2]{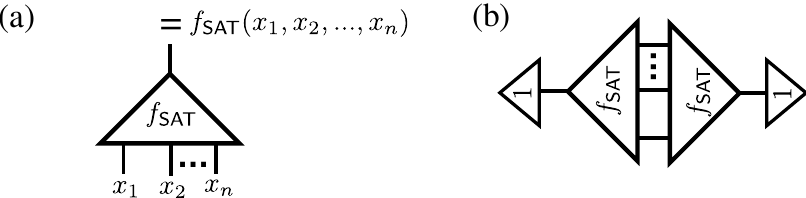}
 \caption{(a) A an abstract depiction of a Boolean state. (b) A Boolean state contracted with itself.}
 \label{fig:Booleanstates}
\end{figure}

 The tensor network contraction for the counting problem is depicted in Figure \ref{fig:Booleanstates}:  (a) gives
a network realization of the function and (b) is the contraction that represents
the norm of $f_{\textnormal{SAT}}$, where 1 has been selected as the desired
output of $f_{\textnormal{SAT}}$. If the value is greater than zero, the
instance has a solution.   

\section{Satisfiability of Certain SAT instances}\label{sec:sat-of-certain}

We give a few sufficient conditions for satisfiability. The first arises in Section \ref{sec:graphical}, namely if the Boolean state represents a
physical phenomenon.    

 \begin{corollary}
 SAT instances are solvable if and only if the corresponding contraction of a Boolean state with itself (as in Figure \ref{fig:Booleanstates} (b)) represents a physical situation allowed by the rules of quantum mechanics Theorem \ref{thm:penrose}.
\end{corollary}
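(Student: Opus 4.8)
The plan is to recognize that this corollary is essentially a direct translation of two facts already established in the excerpt, so the proof should be a short chain of equivalences rather than a new argument. First I would recall the construction from Subsection~\ref{subsec:rephrasing}: a SAT instance $f$ is encoded as the Boolean state $\psi_f = \sum_\x f(\x)\ket{\x}$, and contracting $\psi_f$ with itself (the network of Figure~\ref{fig:Booleanstates}(b)) computes $\braket{\psi_f}{\psi_f} = \|\psi_f\|^2$. By Remark~\ref{theorem:3-SAT} and the accompanying calculation $\|\psi_f\|^2 = \sum_\x f(\x)$, this contraction equals the number of satisfying assignments $\#f$.

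Next I would set up the logical spine. The value $\|\psi_f\|^2$ is a nonnegative integer, so $\|\psi_f\|^2 > 0$ if and only if $\#f \geq 1$, i.e.\ if and only if $f$ is satisfiable. This is precisely the condition $\|\psi_f\| > 0$ flagged in the excerpt. The only remaining ingredient is to identify nonvanishing of this contraction with physical realizability, and that is exactly the content of Penrose's Theorem~\ref{thm:penrose}: the norm of the spin network representing $\braket{\psi_f}{\psi_f}$ vanishes if and only if the physical situation it represents is forbidden by quantum mechanics. Reading the biconditional in the contrapositive gives: the contraction is nonzero (equivalently, represents an allowed physical situation) if and only if $\|\psi_f\|^2 > 0$.

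Stitching these together yields the corollary. Concretely: $f$ is satisfiable $\iff \#f \geq 1 \iff \|\psi_f\|^2 > 0 \iff$ the self-contraction of $\psi_f$ does not vanish $\iff$ (by Theorem~\ref{thm:penrose}) the network represents a physically allowed situation. Each arrow is either the defining encoding, the integer-valuedness of the count, or a direct invocation of a prior result, so no genuinely new computation is needed.

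The main obstacle is not computational but conceptual hygiene: I must be careful to apply Theorem~\ref{thm:penrose} to the correct object. Penrose's theorem concerns the \emph{norm} of a spin network, and the figure in question contracts $\psi_f$ \emph{with itself} to produce $\braket{\psi_f}{\psi_f}$, which is the squared two-norm rather than a bare amplitude. I would therefore make explicit that the closed network of Figure~\ref{fig:Booleanstates}(b) is the spin network whose value is $\|\psi_f\|^2$, so that ``the contraction vanishes'' and ``the represented physical situation is forbidden'' are synonymous in exactly the sense Theorem~\ref{thm:penrose} requires. Once that identification is pinned down, the equivalence is immediate, and the subtlety about squared-norm versus norm is harmless since $\|\psi_f\| > 0 \iff \|\psi_f\|^2 > 0$.
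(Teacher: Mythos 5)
Your proposal is correct and matches the paper's (implicit) reasoning exactly: the paper leaves this corollary unproved because it follows immediately from combining Remark~\ref{theorem:3-SAT} (the self-contraction equals $\#f$) with Theorem~\ref{thm:penrose} (vanishing norm iff physically forbidden), which is precisely the chain of equivalences you write out. Your added care about applying Penrose's theorem to the closed network computing $\|\psi_f\|^2$ is a reasonable clarification but introduces nothing beyond what the paper intends.
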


\begin{example}
What does it mean that a SAT formula cannot be realized physically? Consider Example \ref{ex:gfp} above. We demonstrate how to view this as a SAT formula. We first put a variable on the wire. Then we change $\psi$ to $|01\rangle\langle1|+|10\rangle\langle1|$. Then we attach an output to $\psi$ and set it equal to 1. The picture is
\begin{center}
 \begin{tikzpicture}
   \draw[thick] (1.5,-.5) rectangle (2,-1.5);
\draw (1.75, -1) node{$\psi$};
\draw[thick] (2,-1) -- (2.25,-1);
\node[draw, fill=none, regular polygon, regular polygon sides=3,shape border rotate=30,inner sep=.15cm,thick] at (2.45cm,-1){};
\draw (2.43,-1) node{\textbf{\textnormal{1}}};
\node[draw, fill=none, regular polygon, regular polygon sides=3,shape border rotate=90,inner sep=.15cm,thick] at (.4cm,-1){};
\draw[thick] (.6,-1) -- (.8,-1);
\draw[fill=black] (.85,-1) circle (.1cm);
\draw (.4,-1) node{$x$};
\draw[thick] (1,-.75) .. controls (.77,-.75) and (.77,-1.25) .. (1,-1.25);
\draw[thick] (1,-.75) -- (1.5,-.75);
\draw[thick] (1,-1.25) -- (1.5,-1.25);
 \end{tikzpicture}
\end{center}

Since $x$ can be zero or one, it is represented by the tensor $\langle0|+\langle1|$. The dot is the COPY-tensor. Then we contract 
$$(\langle 0|+\langle1|)(|0\rangle\langle00|+|1\rangle\langle11|)(|01\rangle\langle1|+|10\rangle\langle1|)(|1\rangle)$$
$$=(\langle00|+\langle11|)(|01\rangle+|10\rangle)=0.$$
We see that this collapses to the grandfather paradox in Example \ref{ex:gfp}. But now we can interpret this as a SAT formula. Here $\psi$ corresponds to the formula $x\wedge\neg x$. 
\end{example}

Another class of SAT instances are always satisfiable simply by the structure of their tensor networks, namely trees where every variable is a leaf. These are more familiarly known as read-once functions.

 \begin{definition}[read-once]
 A function $f$ is called \emph{read-once} if it can be represented
 as a Boolean expression using the operations conjunction, disjunction and negation in
 which every variable appears exactly once. We call such a factored expression a read-once
 expression for $f$.
 \end{definition}

We call such a function ROF for short. The definition implies that there are no
COPY-tensors since every variable can only appear in one of the Boolean tensors
in the network. As such, it does indeed have a tree structure when expressed as
a tensor network.  

These formulas represent a special subclass of $r,s$-SAT, which is defined to be the decision problem for SAT formulas written in $r$-CNF form where each variable appears in at most $s$ clauses. Read once formulas, given in CNF form, represent the cases where $r$ is general and $s=1$. These problems have been studied classically and we mention two celebrated results:

\begin{theorem}[Tovey, \cite{tovey1984simplified}]\label{thm:tovey}
 Every instance of $r,r$-SAT is satisfiable.
\end{theorem}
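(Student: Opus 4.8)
The plan is to prove satisfiability not through the tensor-network machinery developed above but through a classical combinatorial matching argument, since the hypotheses of $r,r$-SAT are exactly balanced to invoke Hall's theorem. First I would dispose of degenerate clauses: any clause containing both a literal and its negation is a tautology and is trivially satisfied, and any repeated literal can be contracted, so I may assume that every clause consists of $r$ \emph{distinct} variables. This reduction lets me treat occurrence as a genuine incidence relation between clauses and variables.

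Next I would build a bipartite graph $G$ whose left vertices are the clauses and whose right vertices are the variables, placing an edge between a clause $C$ and a variable $x$ precisely when $x$ occurs (in either polarity) in $C$. The two structural assumptions translate directly into degree bounds: every clause vertex has degree exactly $r$, while every variable vertex has degree at most $r$, since each variable appears in at most $r$ clauses. The objective is a matching saturating every clause, that is, a system of distinct representatives that assigns to each clause a private variable occurring in it.

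The heart of the argument is verifying Hall's condition $|N(S)| \ge |S|$ for every set $S$ of clauses, and this is the step where the balanced degree hypothesis of $r,r$-SAT is essential. I would establish it by double-counting the edges between $S$ and its neighbourhood $N(S)$: counted from the clause side this number equals $r|S|$, since every edge leaving $S$ lands in $N(S)$ and each clause has degree $r$; counted from the variable side it is at most $r|N(S)|$, since each variable has degree at most $r$. Comparing the two expressions gives $r|S| \le r|N(S)|$, hence $|S| \le |N(S)|$, so Hall's condition holds and the desired saturating matching exists.

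Finally I would read off a satisfying assignment from the matching: for each clause $C$ matched to its representative variable $x_C$, set $x_C$ to the truth value that makes the literal of $x_C$ in $C$ true, and set every unmatched variable arbitrarily. Because distinct clauses receive distinct representatives, no variable is required to take two conflicting values, so the assignment is well defined, and each clause is satisfied by its own representative literal. The genuine obstacle is thus entirely concentrated in the Hall-condition verification; the degenerate-clause preprocessing is only a minor technical nuisance, and once clauses are reduced to $r$ distinct variables the edge count is exact and the inequality is immediate.
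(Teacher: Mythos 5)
Your Hall's-theorem argument is correct and complete, but it is worth being clear that it does not parallel anything in the paper: the paper offers \emph{no} proof of Theorem~\ref{thm:tovey} at all, importing it by citation from Tovey, and the only piece of it that the paper reproves with its own machinery is the far weaker $r,1$ (read-once) case in Theorem~\ref{thm:rof}, via normalization of gates and collapsing nested diagonal maps. What you have written is essentially Tovey's original proof: the clause--variable incidence graph has clause-degree $r$ and variable-degree at most $r$, the double count $r|S|\le r|N(S)|$ verifies Hall's condition, and the resulting system of distinct representatives yields a conflict-free satisfying assignment. This buys the full $r,r$ statement, which the tensor-network approach in the paper does not reach (the COPY-tensors reappear as soon as $s>1$ and the diagonal-map collapse no longer applies). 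One small point to tighten: your preprocessing claim that ``any repeated literal can be contracted, so I may assume that every clause consists of $r$ distinct variables'' is not quite right as stated, since contraction leaves a clause with \emph{fewer} than $r$ distinct variables, and the edge count from the clause side then drops below $r|S|$. The theorem genuinely requires each clause to contain at least $r$ distinct variables (this is how Tovey states it; otherwise $(x\vee x\vee x)\wedge(\neg x\vee\neg x\vee\neg x)$ would be a falsifying instance of $3,3$-SAT), so you should either take that as part of the definition of $r$-CNF or state the hypothesis as ``at least $r$ distinct variables per clause,'' under which your inequality reads $r|S|\le\#\{\text{edges from }S\}\le r|N(S)|$ and everything goes through.
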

\begin{theorem}[DuBois, \cite{dubois1990r}]\label{thm:dubois}
 If every instance of $r_0,s_0$-SAT is satisfiable, then $r,s$-SAT is satisfiable for $r=r_0+\lambda$ and $s\le s_0+\lambda\lceil s_0/r_0\rceil$, $\lambda\in\mathbb{N}$.
\end{theorem}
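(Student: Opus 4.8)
The plan is to prove the statement by a satisfiability-preserving \emph{truncation} reduction to the base case. Given an instance $F$ of $r,s$-SAT with $r=r_0+\lambda$ and $s\le s_0+\lambda\lceil s_0/r_0\rceil$, I would delete exactly $\lambda$ literals from every clause, producing a formula $F'$ in $r_0$-CNF. Because each clause of $F'$ is a sub-disjunction of the corresponding clause of $F$, any assignment satisfying $F'$ satisfies $F$. Hence it suffices to carry out the deletion so that in $F'$ every variable occurs in at most $s_0$ clauses: then $F'$ is a legitimate instance of $r_0,s_0$-SAT, which is satisfiable by hypothesis, and the induced assignment satisfies $F$. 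This converts the theorem into a purely combinatorial existence statement, with no further reference to logic.

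To make that existence statement precise, I would model $F$ by the bipartite incidence (multi)graph $G$ whose left vertices are clauses, whose right vertices are variables, and with one edge per literal occurrence, so that $\deg_G(c)=r_0+\lambda$ for every clause $c$ and $\deg_G(v)=d(v)\le s$ for every variable $v$. The required truncation is exactly a subgraph $D\subseteq G$ of deleted occurrences with $\deg_D(c)=\lambda$ for every clause and $\deg_D(v)\ge d(v)-s_0$ for every variable. I would prove that such a $D$ exists by a transportation/flow argument: a source feeds $\lambda$ units into each clause, arcs of capacity equal to the literal multiplicity run from a clause to each variable it contains, and an arc from $v$ to the sink must carry the \emph{demand} $b(v):=\max(0,\,d(v)-s_0)$. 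By the max-flow--min-cut theorem (equivalently Hall's defect condition or Hoffman's circulation criterion), a feasible deletion exists if and only if for every set $W$ of variables
\[
 \sum_{v\in W} b(v)\ \le\ \sum_{c}\min\!\Big(\lambda,\ \#\{\text{occurrences in } c \text{ of variables of } W\}\Big),
\]
i.e. the deletion demand concentrated on $W$ never exceeds the deletion capacity of the clauses that touch $W$.

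The heart of the proof, and the step I expect to be the main obstacle, is verifying this cut inequality from the numeric hypothesis $s\le s_0+\lambda\lceil s_0/r_0\rceil$, and in particular producing the ceiling factor. The easy half is that each demand is small, $b(v)\le s-s_0\le\lambda\lceil s_0/r_0\rceil$, so the left-hand side is controlled; the delicate half is showing the clauses touching $W$ supply enough. A naive uniform count (bounding incidences by $(r_0+\lambda)|C_W|$, or relaxing to a uniform fractional deletion of each literal) falls short precisely by the gap between $s_0/r_0$ and $\lceil s_0/r_0\rceil$, so the argument must exploit that only the \emph{excess} occurrences need deletion and that heavy variables can be peeled off greedily. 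I would therefore first settle the single-step case $\lambda=1$, where $D$ is a system of representatives choosing one literal per clause and the cut condition localizes to a transparent count on the submultigraph induced by an excess set, and then iterate, each step lowering the clause width by one and the occurrence bound by $\lceil s_0/r_0\rceil$. The two points to watch throughout are the integer rounding in the ceiling (the count must not lose a unit when $r_0$ does not divide $s_0$) and the requirement that truncated clauses retain $r_0$ distinct literals; the latter only restricts which edges of $G$ may enter $D$ and is absorbed by working with multiplicities, while the former is exactly what the careful excess-counting is designed to capture.
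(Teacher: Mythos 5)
You should first note that the paper does not prove this statement at all: it is quoted from Dubois \cite{dubois1990r} as a known result, so there is no in-paper argument to compare yours against, and your attempt must stand on its own. It does not: the truncation reduction fails at exactly the point you flagged, and the failure cannot be repaired, because the deletion subgraph $D$ you need simply does not exist in general. Your own feasibility condition already fails for $W$ equal to the entire variable set. Write $t=\lceil s_0/r_0\rceil$ and consider an instance in which every variable attains the maximal degree $s=s_0+\lambda t$. The total demand is $\sum_v b(v)=n(s-s_0)=n\lambda t$, while the total deletion capacity is $\lambda m$; since $(r_0+\lambda)m=sn$, the ratio of demand to capacity is $t(r_0+\lambda)/(s_0+\lambda t)$, which is $\ge 1$ (because $tr_0\ge s_0$) with equality if and only if $r_0\mid s_0$. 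Whenever $r_0\nmid s_0$ the demand \emph{strictly} exceeds the capacity. Concretely, take $r_0=2$, $s_0=3$, $\lambda=1$ (so $r=3$, $s\le 5$) and an instance with three variables and five clauses, each clause containing all three variables: you may delete only five literal occurrences in total, but bringing each variable from five occurrences down to three requires six. This is a strict global deficit, not a loss of slack in some local cut, so no refinement by excess-counting, greedy peeling of heavy variables, or careful handling of multiplicities can recover it. The obstruction \emph{is} the ceiling: $\lceil s_0/r_0\rceil$ is precisely the gain of the theorem over the bound $s_0+\lambda s_0/r_0$ that uniform truncation could at best support.

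The structural conclusion is that the theorem cannot be proved by passing to a subformula of $F$ on the same variable set: any literal-deletion scheme is subject to the count above, and padding over-truncated clauses back up to width $r_0$ with fresh literals destroys the implication that satisfiability of $F'$ yields satisfiability of $F$. A correct proof must change the instance more aggressively --- splitting high-occurrence variables, exploiting the polarities of occurrences, or generalizing the Hall-type matching argument underlying Tovey's theorem --- so that the per-clause budget of $\lambda$ deletions is no longer the binding constraint. If you wish to salvage a piece of your approach, observe that when $r_0\mid s_0$ the global count is exactly tight and your flow formulation is at least not ruled out; but the non-divisible case is where the theorem has content, and there your reduction provably cannot be carried out.
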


This implies that $r,s$-SAT is satisfiable for $s\le r$ by Theorem \ref{thm:tovey} and letting $\lambda=0$ in Theorem \ref{thm:dubois}. In particular, all ROF are satisfiable. Using tensor networks, we can give a very short proof of this.

\begin{definition}
 Let $f:\{0,1\}^n\to\{0,1\}$ be a Boolean function. It is represented by the tensor $\psi_f=\sum_{\x\in\{0,1\}^n}{|\x\rangle\langle f(\x)|}$. We call a tensor of the form $\psi^\dagger_f\psi_{f}$ a \emph{diagonal map}.
\end{definition}

\begin{lemma}\label{lem:diagmaps}
 For a Boolean function $\psi$, $\psi_f^\dagger \psi_f
=\#f^{-1}(0)|0\rangle\langle0|+\#f^{-1}(1)|1\rangle\langle 1|$ where
$\#f^{-1}(b)$ denotes the size of the pre-image of $b$.  Furthermore, if
$\psi_f$ is not the zero function, we can normalize it to get a tensor
$\zeta_f$ such that $\zeta^\dagger_f \zeta_f=\textnormal{id}$.
\end{lemma}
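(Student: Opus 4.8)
The plan is to compute $\psi_f^\dagger \psi_f$ directly from the definition $\psi_f=\sum_{\x}\ket{\x}\bra{f(\x)}$ and then extract the normalization claim as a corollary. First I would write $\psi_f^\dagger=\sum_{\y}\ket{f(\y)}\bra{\y}$, so that
\begin{equation*}
\psi_f^\dagger\psi_f=\sum_{\x,\y}\ket{f(\y)}\braket{\y}{\x}\bra{f(\x)}
=\sum_{\x}\ket{f(\x)}\bra{f(\x)},
\end{equation*}
where the middle step uses $\braket{\y}{\x}=\delta_{\x\y}$ to collapse the double sum over the orthonormal computational basis $\{\ket{\x}\}$. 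The remaining single sum ranges over all $\x\in\{0,1\}^n$, and since $f(\x)\in\{0,1\}$, I would partition the sum according to the value $b=f(\x)$: each $\x$ with $f(\x)=0$ contributes a copy of $\ketbra{0}{0}$ and each $\x$ with $f(\x)=1$ contributes a copy of $\ketbra{1}{1}$. Counting these gives exactly $\#f^{-1}(0)\,\ketbra{0}{0}+\#f^{-1}(1)\,\ketbra{1}{1}$, which is the first assertion.

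For the normalization claim, observe that the operator just computed is diagonal in the computational basis with nonnegative real entries $\#f^{-1}(0)$ and $\#f^{-1}(1)$. If $\psi_f$ is not the zero tensor then at least one of these entries is strictly positive; in fact both are positive unless $f$ is constant, but in any case I would handle the possibly-zero entries by restricting attention to the support. The natural construction of $\zeta_f$ is to rescale each computational-basis component of $\psi_f$ by the inverse square root of the corresponding diagonal entry — that is, to define $\zeta_f:=\psi_f\,D^{-1/2}$ where $D=\psi_f^\dagger\psi_f$ (inverting $D$ on its support). Then $\zeta_f^\dagger\zeta_f=D^{-1/2}\psi_f^\dagger\psi_f D^{-1/2}=D^{-1/2}DD^{-1/2}=\mathrm{id}$ on the support of $f$.

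The only genuine subtlety, and the step I would treat most carefully, is the degenerate case where $f$ takes only one value on $\{0,1\}^n$, so that one of $\#f^{-1}(0)$, $\#f^{-1}(1)$ vanishes and $D$ is not invertible on the full two-dimensional space. In that situation $\psi_f^\dagger\psi_f$ is a rank-one projector up to scaling, and $\zeta_f$ can only be normalized to the identity on the one-dimensional range of $D$; the clean statement $\zeta_f^\dagger\zeta_f=\mathrm{id}$ holds only if we interpret $\mathrm{id}$ as the identity on the support. I would note this explicitly, observing that since the hypothesis excludes only the zero tensor (not the constant-$1$ tensor), the claim as stated is understood on the image of $\psi_f$. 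Everything else is a routine orthogonality calculation with no real obstacle, resting entirely on $f(\x)f(\y)=\delta_{\x\y}$ for $\{0,1\}$-valued $f$, the same identity already invoked in Remark~\ref{theorem:3-SAT}.
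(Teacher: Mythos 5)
Your computation is exactly the paper's: the same orthonormality collapse of $\psi_f^\dagger\psi_f$ to $\sum_\x \ket{f(\x)}\bra{f(\x)}$ followed by partitioning over $b=f(\x)$, and your $\zeta_f=\psi_f D^{-1/2}$ (inverted on the support) coincides with the paper's explicit $\zeta_f=\sum_\x{\sqrt{\#f^{-1}(f(\x))}^{-1}\ket{\x}\bra{f(\x)}}$. Your caveat about constant $f$ --- that $\zeta_f^\dagger\zeta_f$ is then only the identity on the one-dimensional support --- is a legitimate point the paper elides with ``it is clear,'' and it is harmless for the downstream use in Theorem~\ref{thm:rof}, where the final contraction $\bra{1}\cdot\ket{1}$ only probes the image.
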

\begin{proof}
 $$\psi_f^\dagger\psi_f=\sum_{\x,\y}{|f(\x)\rangle\langle \x|\y\rangle\langle f(\y)|}$$
$$\;=\sum_{\x}{|f(\x)\rangle\langle f(\x)|}$$
$$\qquad\qquad\qquad\;\;\,=\#f^{-1}(0)|0\rangle\langle0|+\#f^{-1}(1)|1\rangle\langle 1|.$$

Now let $\zeta_f=\sum_\x{\sqrt{\#f^{-1}(f(\x))}^{-1}|\x\rangle\langle f(\x)|}$. It is clear that $\zeta_f$ has the desired property.
\end{proof}

For a Boolean state representing an ROF, we replace every gate with its
normalization. This gives positive weights to the different assignments
of variables. The contraction then sums up the weights of the satisfying
assignments. It is clear that normalizing a ROF does not change the fact that
the norm will be zero if and only if it has no satisfying solution. The
resulting scalar, however, will no longer reflect the number of satisfying
solutions.

\begin{theorem}\label{thm:rof}
 Every ROF is satisfiable.
\end{theorem}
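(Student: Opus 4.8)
The plan is to exploit the two features that distinguish a read-once formula from a general SAT instance: it has \emph{no} COPY-tensors, and its tensor network is therefore a tree whose leaves are the variable wires and whose internal nodes are the clause/gate tensors. The strategy is to normalize every gate using Lemma~\ref{lem:diagmaps} and to observe that the resulting tree is built entirely out of isometries, so that contracting it can only preserve norms; since no COPY-tensor is ever encountered, nothing can force a cancellation to zero.

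Concretely, I would first replace each gate $\varphi$ appearing in the formula by its normalization $\zeta_\varphi$. Because the elementary connectives used to build a read-once expression (conjunction, disjunction, negation) each attain both values $0$ and $1$, Lemma~\ref{lem:diagmaps} applies and gives $\zeta_\varphi^\dagger\zeta_\varphi=\mathrm{id}$; that is, each normalized gate is an isometry from its one-qubit output wire into its input wires. The key structural step is then to propagate this from the root down to the leaves: because the formula is read-once, the subtrees feeding the inputs of any gate involve pairwise disjoint sets of variables, so the corresponding normalized sub-tensors act on disjoint tensor factors. Composing and tensoring isometries again yields an isometry, and hence the contraction of the entire normalized tree is an isometry $\zeta_f$ with $\zeta_f^\dagger\zeta_f=\mathrm{id}$.

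To finish, I would apply $\zeta_f$ to the selected output $|1\rangle$. Since $\zeta_f$ is an isometry it preserves norms, so $\|\zeta_f|1\rangle\|=\||1\rangle\|=1>0$ and in particular $\zeta_f|1\rangle\neq 0$. I would then note that normalizing a gate only rescales its nonzero amplitudes by strictly positive factors, so the support of the contracted tensor is unchanged: $\zeta_f|1\rangle$ and $\psi_f=\tilde\psi_f|1\rangle$ vanish on exactly the same assignments. Hence $\psi_f\neq 0$, and by Remark~\ref{theorem:3-SAT} the formula has a satisfying assignment.

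The routine parts (the per-gate description of $\zeta_\varphi$ and the verification that amplitudes are products of positive weights) are immediate from Lemma~\ref{lem:diagmaps}. The one place that deserves care — and which is exactly where the read-once hypothesis enters — is the claim that the normalized tree contracts to a genuine isometry: this fails the moment a variable fans out through a COPY-tensor, since then two branches no longer sit on disjoint factors and the composition need not preserve the norm. I would therefore state the disjoint-support/tree property explicitly and treat the isometry-composition step as the crux of the argument.
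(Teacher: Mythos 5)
Your proposal is correct and follows essentially the same route as the paper: normalize each gate via Lemma~\ref{lem:diagmaps}, use the tree (read-once) structure to collapse the nested maps $\zeta^\dagger\zeta=\mathrm{id}$ so that $\langle 1|\zeta_f^\dagger\zeta_f|1\rangle=1\neq 0$, and note that normalization preserves the support of the state. Your isometry phrasing and the explicit appeal to disjoint variable supports are just a more careful spelling-out of the paper's ``successively collapse the nested diagonal maps'' step.
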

\begin{proof}
 
We first normalize every gate in the tree. As discussed above, this new tensor network will have norm zero if and only if it is unsatisfiable. The tensor network is given as:

\begin{center}
 \includegraphics[scale=.2]{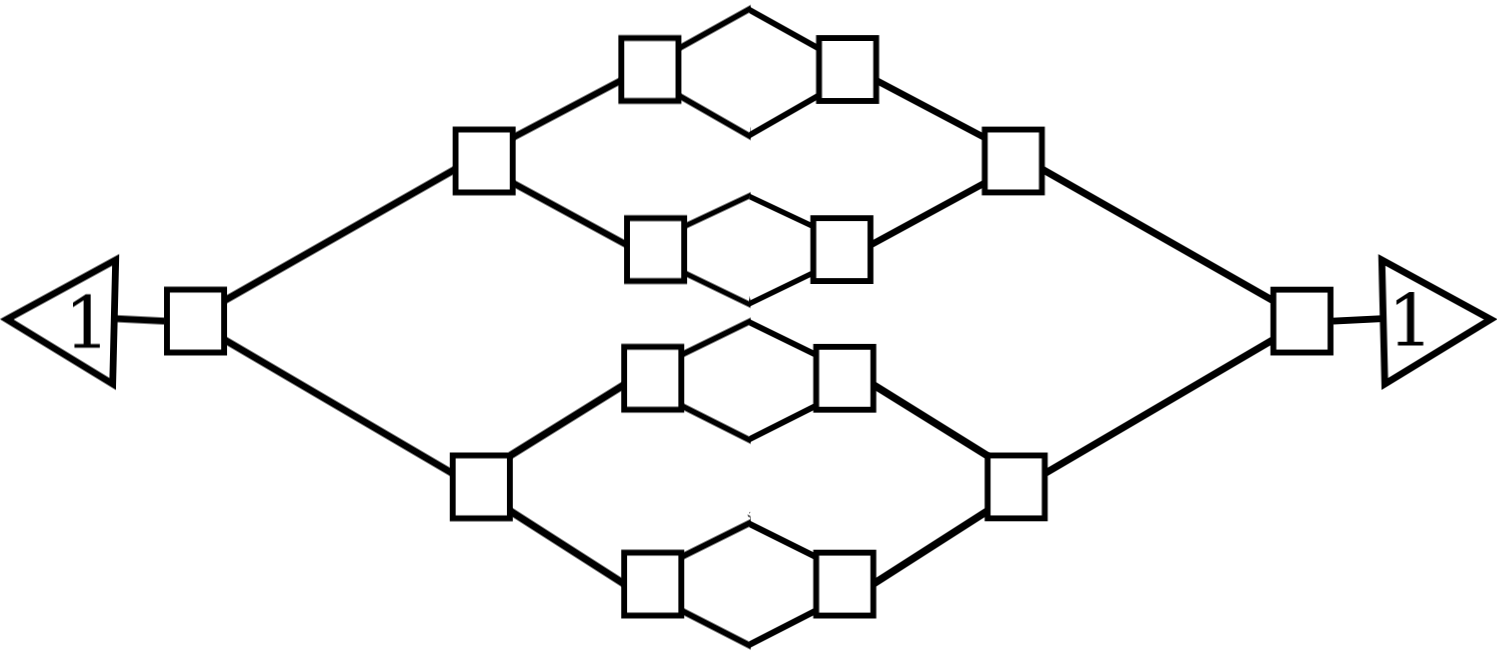}
\end{center}
We observe a series of nested diagonal maps. As any ROF does not allow the constant zero Boolean gate, then by Lemma \ref{lem:diagmaps} these maps
are equivalent to the identity. So we can successively collapse them until we
get the contraction $\langle 1|(\textnormal{id})|1\rangle=1$. So it is indeed satisfiable.

\end{proof}

\section{Efficient tensor contraction algorithm for high correspondence
networks}\label{sec:algorithm} 

Consider a tensor network with a known efficient, i.e. polynomial, contraction method. By inserting new wires and tensors into the network, one can construct a tensor network which is not necessarily easy to contract. Of interest are ways to reduce tensor networks to those that are computationally tractable in an efficient way.

In particular, we are interested in transforming Boolean states into others with polynomial time evaluations.  One approach is to restrict the number of tensors added to be at most $O(\log n)$, where $n$ is the number of variables.  

\begin{definition}[Tensor network correspondence] 
Let $v$ be the minimum number of vertices needed to be removed from the network $T$, with their incident edges becoming dangling wires, to transform it into a network $U$ known to be efficiently contractible. The \emph{correspondence} of $T$ with respect to $U$ is defined to be $1/v$.
\end{definition} 

\begin{example}[Low correspondence] 
It is well known that stabilizer circuits are efficiently contractible.  A polynomial number of stabilizer gates together with the addition of polynomially many single qubits gates, such as a phase gate, results in the stabilizer class of circuits being elevated to a universal set of quantum gates.  In general, these networks are as difficult to contract as any problem in QMA, and the correspondence of this family of networks is inversely proportional to some polynomial in the total number of single-qubit gates.  
\end{example}

For our purposes, we will be interested in restricting our definition of correspondence to the number of COPY-tensors needed to be removed to recover a tensor network that can be quickly evaluated. In this way, we can construct a wide class of counting problems that can be solved efficiently using tensor contraction algorithms.  Here we focus on the Boolean states that are close to tree tensor networks.   

It is well known that tree tensor networks are polynomially contractible in the number of gates \cite{halin1976s,lauritzen1988local,markov2008simulating}. For Boolean states, this includes the tensors $\langle0|+\langle1|$ placed on the variable wires.
\begin{definition}
 Given a rooted tree, a \emph{limb} is a sequence of vertices $v_1,\dots,v_k$ such that $v_2,\dots,v_{k-1}$ each only have one child and $v_{i+1}$ is a child of $v_i$.
\end{definition}
\begin{proposition}\label{prop:limbttn}
 An $n$-variable Boolean state that is a tree has $O(n)$ gates.
\end{proposition}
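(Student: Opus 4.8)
The plan is to bound the number of gates by relating them structurally to the tree's leaves and then invoking the fact that a Boolean state which is a tree has exactly $n$ variable leaves. The key observation is that in the tensor network for a Boolean state, every variable $x_i$ corresponds to a dangling wire terminated by a $\langle 0| + \langle 1|$ tensor, and because the state is read-once (a tree with no COPY-tensors), each of the $n$ variables appears as a distinct leaf. So the tree has exactly $n$ leaves coming from the variables (plus the single root output wire set to $\ket{1}$).

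First I would set up the correspondence between the tensor network and a rooted tree whose leaves are the variable tensors and whose internal vertices are the Boolean gates (clauses). Next I would argue that each gate has some fan-in, i.e.\ arity at least $2$ if we insist gates combine at least two wires; a gate of arity $1$ (such as a $\NOT$) can be absorbed or treated as a degree-two vertex along a limb. The heart of the counting argument is that in a tree where every internal vertex has at least two children, the number of internal vertices is bounded by the number of leaves. Since there are $n$ leaves, there are at most $n-1$ branching gates.

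The subtlety — and the reason the \emph{limb} definition was introduced just before the statement — is that the tree may contain long chains of unary gates (negations, or single-input clauses), which are internal vertices with exactly one child and thus do not get charged against a leaf by the simple leaf-counting bound. These are precisely the vertices $v_2,\dots,v_{k-1}$ lying along a limb. So I would handle them by arguing that the total number of unary gates is also $O(n)$: each maximal limb connects either two branching vertices or a branching vertex to a leaf, and since there are $O(n)$ branching vertices and $O(n)$ leaves there are $O(n)$ limbs; but a limb of unbounded length could still carry unboundedly many unary gates. I would resolve this by noting that two consecutive negations cancel (and consecutive identity-type gates can be merged), so without loss of generality each limb carries $O(1)$ gates, giving $O(n)$ gates in total.

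The main obstacle I anticipate is making the ``limbs carry boundedly many gates'' step rigorous without an explicit normal-form assumption: one must either assume the read-once expression is reduced (no redundant nested unary operations) or show that any tree Boolean state can be contracted to an equivalent one of this reduced form without increasing the variable count. Granting such a reduced-form convention — which the \emph{limb} definition is evidently meant to supply — the leaf-counting bound together with the bounded-length limbs yields the clean $O(n)$ estimate, and the remainder is routine.
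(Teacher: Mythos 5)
Your proof is correct and follows essentially the same route as the paper's: root the tree at the output, bound the branching gates by the $n$ variable leaves, and bound the gates along each limb by a constant because the interior vertices of a limb are unary (identity or negation) and hence reducible, so the tree is at worst a binary tree with $O(n)$ gates. The reduced-form convention you flag as the main obstacle is exactly the step the paper also handles by fiat (``we can therefore assume there is at most a single negation operator between $v_1$ and $v_k$''), so your treatment is, if anything, slightly more explicit about the same assumption.
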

\begin{proof}
 A Boolean state has a natural root: the output bit. We argue that the limbs of a Boolean state are of length at most three. Take a limb $v_1,\dots,v_k$. Then consider $v_i$, $i\ne 1,k$. It is a unary operation on clauses, implying that it must either be the identity or negation operator. We can therefore assume that there is at most a single negation operator between $v_1$ and $v_k$ and nothing else. So at worst the tree is a perfect binary tree which has $O(n)$ gates.
\end{proof}
So we can conclude that the norm of a Boolean state that is a tree can be computed in polynomial time in the number of variables.

Let $X$ be a Boolean state. Corollary \ref{cor:c2sop} (in Appendix \ref{sec:tensor-defs}) allows us to remove a COPY-tensor, resulting in a sum of tensor networks:
\begin{center}
 \includegraphics[scale=.2]{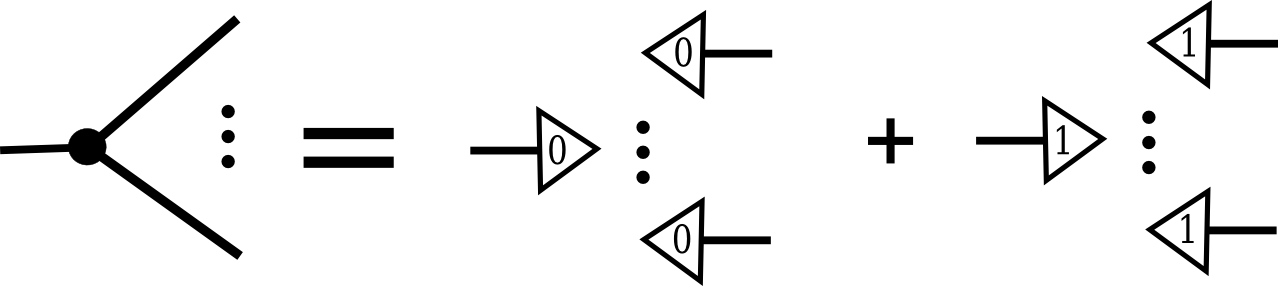}
\end{center}
We note that there are two summands, where either 0 or 1 is attached to the dangling wires resulting from the removal of the COPY-tensor. Given a tensor network $X$, if we remove a COPY-tensor, we denote the two summands $X_{0}$ and $X_1$. Furthermore, we have that $$\CC\{X,X\}=\sum_{i}{\CC\{X_i,X_i\}}.$$

We can now give an algorithm for contracting a tensor network, assuming we have an algorithm for contracting a tree tensor network in polynomial time. Let $C_1,\dots,C_m$ be the COPY-tensors appearing in $X$. Then $$\CC\{X,X\}=\sum_{i_1,\dots,i_m}{\CC\{X_{i_1,\dots,i_m},X_{i_1,\dots,i_m}\}}$$ where $X_{i_1,\dots,i_m}$ is the tree formed by removing all the COPY-tensors and assigning the value $i_k$ to the wires that were incident to $C_k$. So the algorithm for computing $\CC\{X,X\}$ computes the expression as a sum of contractions of trees.

\begin{theorem}[Upper bounding tensor contraction in terms of COPY-tensors]\label{thm:upperbound}
Given a tensor network as described in Subsection \ref{subsec:rephrasing}, the complexity of evaluating this network is $O((g+cd)^{O(1)} 2^c)$ where $c$ is the number of COPY-tensors, $g$ is the number of gates, and $d$ is the maximal degree of any COPY-tensor.
\end{theorem}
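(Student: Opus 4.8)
The plan is to bound the cost of the algorithm already described immediately before the statement, namely the expansion
\[
\CC\{X,X\}=\sum_{i_1,\dots,i_m}\CC\{X_{i_1,\dots,i_m},X_{i_1,\dots,i_m}\},
\]
where the COPY-tensors $C_1,\dots,C_m$ (with $m=c$) have been stripped out and each summand is a contraction of a tree tensor network. The proof is essentially an accounting argument: I would separately bound (i) the number of terms in the sum, (ii) the cost of building each tree tensor network $X_{i_1,\dots,i_c}$ after the substitution, and (iii) the cost of contracting one such tree. Multiplying (i) by the sum of (ii) and (iii) yields the claimed $O((g+cd)^{O(1)}2^c)$.

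First I would count the terms. Each COPY-tensor removal splits the network into two summands ($X_0$ and $X_1$) by Corollary \ref{cor:c2sop}, so removing all $c$ of them produces at most $2^c$ trees; this is where the exponential factor $2^c$ enters, and it is the only exponential dependence. Next, for a fixed choice of bits $(i_1,\dots,i_c)$, I would argue that forming $X_{i_1,\dots,i_c}$ is cheap: removing a degree-$d$ COPY-tensor and assigning a constant $i_k$ to each of its (at most $d$) incident wires costs $O(d)$ work per tensor, hence $O(cd)$ in total to produce one tree, and this bookkeeping is polynomial. Then I would invoke Proposition \ref{prop:limbttn} together with the remark that tree tensor networks are polynomially contractible in the number of gates \cite{halin1976s,lauritzen1988local,markov2008simulating}: the resulting tree has $O(g)$ gates (the substitutions do not increase the gate count), so each individual tree contraction runs in time polynomial in $g$, say $O(g^{O(1)})$.

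Combining, the total running time is $2^c$ times the per-branch cost $O((g+cd)^{O(1)})$, where the $g+cd$ bundles together both the tree-contraction cost and the substitution bookkeeping into a single polynomial factor, giving $O((g+cd)^{O(1)}2^c)$. The one step that needs genuine care rather than mere counting is justifying that each summand really is a tree once every COPY-tensor has been removed and constants substituted — i.e.\ that the network in Subsection \ref{subsec:rephrasing} becomes acyclic precisely because the COPY-tensors are the only source of fan-out (a variable or clause wire can branch only through a COPY-tensor). I would make this explicit by noting that the construction places exactly one Boolean gate per clause and routes each variable to its clauses solely through COPY-tensors, so deleting all COPY-tensors leaves a forest of gates whose only remaining branching is the binary structure analyzed in Proposition \ref{prop:limbttn}; then the cited polynomial tree-contraction algorithm applies verbatim to each branch.

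The main obstacle I anticipate is not the term count but the claim that the per-branch tree contraction is genuinely polynomial in the stated parameters: one must ensure that substituting constants on the dangling wires left by removed COPY-tensors does not secretly blow up either the number of gates or the bond dimensions, and that what remains after deletion is connected and rooted in the way Proposition \ref{prop:limbttn} requires. Handling disconnected forests (the deletion may split $X$ into several components) is a minor but necessary wrinkle, resolved by contracting each component independently and multiplying the resulting scalars, which keeps the cost polynomial per branch.
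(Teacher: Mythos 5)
Your proposal matches the paper's proof: both expand $\CC\{X,X\}$ over the $2^c$ assignments to the removed COPY-tensors, note that each removal adds a constant gate to each dangling wire (hence $g+cd$ gates per summand), and contract each resulting tree in polynomial time via \cite{markov2008simulating}. Your added care about why each summand is acyclic and how to handle disconnected components is a reasonable elaboration of details the paper leaves implicit, but the argument is the same.
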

\begin{proof}
Contracting a tensor network is upper bounded by evaluating the expression $$\CC\{X,X\}=\sum_{i_1,\dots,i_m}{\CC\{X_{i_1,\dots,i_m},X_{i_1,\dots,i_m}\}}.$$ For each COPY-tensor removed, we double the number of summands. We also add a gate to each dangling wire created by removing a COPY-tensor. So the computation is bounded by summing over the contraction of $2^c$ trees, each of which can be contracted in time $O((g+cd)^{O(1)})$ \cite{markov2008simulating}.
\end{proof}

\begin{remark}
 We must mention that Theorem \ref{thm:upperbound} only applies when the Boolean state is given by a tensor network as described in Subsection \ref{subsec:rephrasing}. Otherwise, there are many tensor networks to describe a given Boolean state, many of which invalidate the statement of Theorem \ref{thm:upperbound}.
\end{remark}

\begin{corollary}[A class of efficiently solvable $\#$SAT instances]\label{cor:efficientSAT}
Suppose that a Boolean state with $n$ variables has inverse logarithmic correspondence with a tree tensor network and the maximal degree of the COPY-tensors is $O(n^\alpha)$. Then this Boolean state can be contracted in polynomial time.
\end{corollary}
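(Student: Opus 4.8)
The plan is to combine Theorem~\ref{thm:upperbound} with the two hypotheses directly, checking that each term in the running-time bound becomes polynomial in $n$. First I would unpack the definition of correspondence: saying the Boolean state has \emph{inverse logarithmic correspondence} with a tree tensor network means the number $v$ of vertices (here, COPY-tensors) that must be removed to reach a tree satisfies $1/v = \Omega(1/\log n)$, i.e.\ $c = v = O(\log n)$. This is the crucial observation, because it controls the exponential factor: $2^c = 2^{O(\log n)} = n^{O(1)}$ is polynomial in $n$. So the seemingly exponential blow-up from summing over all $0/1$ assignments to the removed COPY-wires is tamed precisely by the logarithmic correspondence assumption.

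Next I would bound the remaining quantities $g$ and $d$ appearing in the $O((g+cd)^{O(1)}2^c)$ estimate. The hypothesis gives $d = O(n^\alpha)$ directly. For $g$, I would invoke Proposition~\ref{prop:limbttn}: once the $c = O(\log n)$ COPY-tensors are removed, the residual network is a tree, and an $n$-variable Boolean tree has $O(n)$ gates; reattaching the at most $cd$ new dangling wires (each carrying a single $\ket{0}$ or $\ket{1}$ cap, hence one gate) adds only $O(cd) = O(n^\alpha \log n)$ gates. Thus $g + cd = O(n) + O(n^\alpha\log n)$, which is polynomial in $n$. Substituting $c=O(\log n)$, $d=O(n^\alpha)$, and $g=\mathrm{poly}(n)$ into the bound of Theorem~\ref{thm:upperbound} yields $(g+cd)^{O(1)}2^c = \mathrm{poly}(n)\cdot n^{O(1)} = \mathrm{poly}(n)$, which is the desired conclusion.

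The step I expect to be the main obstacle is the bookkeeping that ensures the network obtained after removing the COPY-tensors is genuinely a \emph{tree} of the form to which Proposition~\ref{prop:limbttn} applies, and that the gate count really stays $O(n)$ rather than degrading when the $cd$ caps are inserted. One must confirm that the correspondence is defined with respect to a tree tensor network (so that Theorem~\ref{thm:upperbound}'s per-summand tree-contraction cost $O((g+cd)^{O(1)})$ is legitimately available) and that capping each severed wire with $\ket{0}$ or $\ket{1}$ does not create high-degree vertices or new limbs that violate the hypotheses of Proposition~\ref{prop:limbttn}. Aside from this structural check, the argument is a routine substitution: the logarithmic bound on $c$ converts the exponential factor into a polynomial one, while the polynomial bounds on $g$ and $d$ keep the base polynomial, so their product remains polynomial in $n$.
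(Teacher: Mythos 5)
Your argument is correct and follows the paper's own proof essentially verbatim: translate inverse logarithmic correspondence into $c=O(\log n)$ so that $2^c=n^{O(1)}$, take $d=O(n^\alpha)$ from the hypothesis and $g=O(n)$ from Proposition~\ref{prop:limbttn}, and substitute into the bound of Theorem~\ref{thm:upperbound}. Your extra structural check that the residual network is a tree is a reasonable precaution but does not change the route.
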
 
\begin{proof}
  Inverse logarithmic correspondence implies that if the SAT instance has $n$ variables, then it has $O(\log(n))$ COPY-tensors. By Theorem \ref{thm:upperbound}, the complexity of contraction is $O((g+n^\alpha\log(n))^{O(1)}2^{\log(n)})$. By Proposition \ref{prop:limbttn}, $g$ is $O(n)$, so the algorithm presented will contract the tensor network in polynomial time.
\end{proof}

Theorems \ref{thm:tovey} and \ref{thm:dubois}  only pertain to instances of $r,s$-SAT that are always satisfiable, they do not address the complexity of the corresponding counting problem.  For instance, read-twice monotone 3-CNF formulas are always satisfiable by Theorems \ref{thm:tovey} and \ref{thm:dubois}, but counting the number of solutions is $\#\P$-hard \cite{xia2007computational}.
%

Our results, in contrast, give conditions for the tractability of the counting problem associated to $r,s$-SAT. One of these conditions is not implicit in the definition of $r,s$-SAT, namely that we bound the number of variables that can appear in more than one clause. To restate Corollary \ref{cor:efficientSAT}, we have shown that \#$r,\textnormal{poly}(n)$-SAT is polynomial time countable if the number of COPY-tensors is bounded by $\log(n)$.

\subsection{Relation to Other Algorithms}

There have been several results relating the complexity of contracting tree tensor networks to various measures such as treewidth, clique-width, and branchwidth. An overview can be found in \cite{Nishimura2007solving}. The general type of result is that the time to solve a \#SAT instance is polynomial in the number of variables and exponential in the corresponding notion of width \cite{bacchus2003algorithms,fischer2008counting}.

 For example, in \cite{bacchus2003algorithms}, an algorithm is given that runs in time $n^{O(1)}2^{O(\omega)}$, where $\omega$ is the branchwidth of a certain hypergraph representing the SAT problem. Unfortunately, this result is not directly comparable to ours as it does not pertain to the branchwidth of the Boolean state but of a different structure.
 
We do wish to briefly discuss how our algorithm compares with algorithms based on treewidth of the Boolean state and the relationship between treewidth and the number of COPY-tensors. Contracting a tree tensor network $T$ takes time $g^{O(1)}\exp(\tn{tw}(T))$, where $\tn{tw}(T)$ is the treewidth of $T$ \cite{markov2008simulating}. More explicitly, we have the following results:
 \begin{theorem}[\cite{fischer2008counting}]\label{thm:treewidth}
  Given an $n$-variable SAT formula with treewidth $k$, there is an algorithm counting the number of solutions in time $4^k(n+n^2\log_2(n))$.
 \end{theorem}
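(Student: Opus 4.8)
The plan is to reduce the counting problem to a dynamic programming computation over a tree decomposition of the SAT formula and then carefully bound the cost of each local operation. First I would fix a width-$k$ tree decomposition $(\mathcal{T},\{B_t\})$ of the formula, where each bag $B_t$ contains at most $k+1$ variables, and root the decomposition arbitrarily. The central object is, for each node $t$, a table indexed by the partial assignments to the variables in $B_t$; to each such partial assignment one associates the number of extensions to the variables introduced in the subtree rooted at $t$ that satisfy every clause already fully contained in that subtree. Since $|B_t|\le k+1$, each table has at most $2^{k+1}$ entries, which is where the factor $2^k$ (up to the base-change constant hidden in $4^k=2^{2k}$) will ultimately come from.

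Next I would specify the recurrence along the standard node types of a nice tree decomposition. For a leaf node the table is initialized trivially; for an introduce node a new variable is added and the table is extended by branching on its Boolean value; for a forget node one variable is projected out by summing the counts over its two values, taking care to incorporate exactly those clauses whose variables have now all appeared so that no clause is counted twice and none is omitted; and for a join node the two child tables over the shared bag are combined by multiplying the counts of matching partial assignments, since satisfying extensions in the two subtrees are independent given the shared assignment. The final answer $\#f$ is read off at the root. I would then argue correctness by induction up the tree, with the invariant that each table entry equals the number of satisfying extensions in its subtree.

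For the running-time bound I would count the work per node and multiply by the number of nodes. Each table has $O(2^k)$ entries, and the dominant operation is at the join nodes and the clause-checking during forget steps; the clause evaluation and the table merges cost $O(2^k)$ arithmetic operations per node, and each arithmetic operation on counts bounded by $2^n$ costs a $\log_2(n)$-type factor for the integer arithmetic. Aggregating over the $O(n)$ nodes of the decomposition and folding in the bit-complexity of the additions yields the stated $4^k(n+n^2\log_2(n))$ bound, where the $n^2\log_2(n)$ term absorbs the cost of manipulating the $O(n)$-bit integers across the $O(n)$ nodes.

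The main obstacle I expect is purely bookkeeping rather than conceptual: ensuring that the forget-node update charges each clause to exactly one node, so that the product-and-sum recurrence neither double-counts nor drops any constraint, and simultaneously that the bit-length of the accumulated counts is tracked precisely enough to justify the $n^2\log_2(n)$ arithmetic term rather than a looser bound. Establishing the cleanest form of the invariant that survives the join step — where the same shared assignment must be matched across both children without recounting clauses straddling the two subtrees — is the delicate point on which the correctness proof hinges.
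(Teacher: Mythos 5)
The paper gives no proof of this statement: it is quoted verbatim as an external result of Fischer, Makowsky and Ravve \cite{fischer2008counting}, so there is no internal argument to compare yours against. Your proposal --- dynamic programming over a (nice) tree decomposition with tables indexed by partial assignments to the bag, combined by branching at introduce nodes, projection at forget nodes, and pointwise multiplication at join nodes --- is exactly the standard route taken in that reference, and the correctness invariant you state is the right one. So in spirit you are reconstructing the intended proof rather than finding a new one.

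One point in your sketch is genuinely off, though, and it is not mere bookkeeping: the provenance of the $4^k$. You attribute it to the $2^{k+1}$ table entries ``up to the base-change constant hidden in $4^k=2^{2k}$,'' but $2^{2k}$ is not $O(2^{k+1})$; a table indexed only by assignments to the at most $k+1$ bag variables cannot by itself account for a $4^k$ factor. In the cited argument the treewidth is that of a graph in which clauses also live in bags (the incidence-type structure), and each table entry is indexed by a pair: a partial assignment to the bag's variables \emph{and} the subset of the bag's clauses not yet satisfied by the subtree below. It is the product of these two indices, each of size at most $2^{k+1}$, that produces $4^k$, and it is also what resolves the join-node difficulty you flag (a clause straddling both children is satisfied if either side satisfies it, which is exactly what the clause-subset coordinate tracks). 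If instead you charge each clause to a single node of a primal-graph decomposition, as your forget-node description suggests, you would be proving a $2^{O(k)}$ bound for a different width parameter --- fine as mathematics, but not the stated theorem. The $n+n^2\log_2 n$ arithmetic accounting you give is consistent with the cited bound.
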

\begin{theorem}[\cite{samer2006fixed}]
  Given a SAT formula with treewidth $k$, largest clause of size $l$, and $N$ the number of nodes of the tree-decomposition, the \#SAT problem can be solved in time $O(4^klN)$.
\end{theorem}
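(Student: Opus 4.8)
The plan is to establish the bound via standard bottom-up dynamic programming along a tree decomposition, so I would begin by fixing the combinatorial structure. Given the formula $F$ together with a tree decomposition of width $k$ and $N$ nodes, I would first convert it into a \emph{nice} tree decomposition --- one whose internal nodes are of type \emph{leaf}, \emph{introduce}, \emph{forget}, or \emph{join} --- which can be done in time $O(k N)$ while only multiplying the number of nodes by a constant and preserving the width $k$. I would take the underlying graph to be the incidence graph of $F$, so that a bag $B_t$ at a node $t$ contains both variables and clauses, and each bag has at most $k+1$ elements.

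The core of the argument is the table maintained at each node $t$. For every assignment $\alpha$ to the variables in $B_t$ and every subset $S$ of the clauses in $B_t$, the entry $M_t[\alpha, S]$ records the number of assignments to all variables introduced in the subtree rooted at $t$ that (i) restrict to $\alpha$ on $B_t$ and (ii) satisfy exactly the clauses in $S$ among those of $B_t$, while fully satisfying every clause that has already been forgotten below $t$. I would then give the transition rules processed from the leaves upward: a leaf has the trivial one-entry table; an \emph{introduce-variable} node branches each entry into its two extensions; an \emph{introduce-clause} node initializes that clause's satisfaction status according to $\alpha$; a \emph{forget-variable} node sums out the forgotten variable; a \emph{forget-clause} node discards every entry in which the clause about to leave the bag is not yet satisfied, thereby enforcing global satisfaction; and a \emph{join} node combines the two child tables by pairing entries with matching variable assignments and unioning their satisfied-clause sets. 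The value $\#F$ is then read off as the single surviving entry at the root.

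Correctness would follow by induction on the subtree, the key invariant being that $M_t[\alpha,S]$ counts each partial assignment exactly once; the properties of a tree decomposition guarantee that a variable occurs in a contiguous collection of bags, so no clause is evaluated before all its variables have appeared, and none is counted twice. For the running time, each bag contributes at most $2^{k+1}$ variable assignments and at most $2^{k+1}$ clause-subsets, and the dominant cost occurs at join nodes: pairing entries with equal variable parts and combining their clause-satisfaction patterns costs $O(4^k)$ per node, while evaluating or updating a clause of size up to $l$ contributes the factor $l$; summing over the $N$ nodes yields $O(4^k l N)$.

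I expect the main obstacle to be the join node together with the bookkeeping of clause satisfaction: one must simultaneously avoid double-counting assignments shared between the two subtrees and correctly propagate which clauses are already satisfied, and it is precisely the pairing of clause-satisfaction patterns across the two children that produces the $4^k$ factor rather than the naive $2^k$. Getting this invariant exactly right --- so that every model of $F$ is counted once and no non-model slips through --- is the delicate part of the proof.
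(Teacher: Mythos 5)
The paper does not prove this statement at all: it is quoted verbatim from the cited reference of Samer and Szeider, and your argument is essentially a correct reconstruction of that reference's proof --- bottom-up dynamic programming over a nice tree decomposition of the incidence graph, with tables indexed by partial variable assignments and clause-satisfaction subsets. Your accounting at the join node also checks out: with $v$ variables and $c$ clauses in a bag ($v+c\le k+1$), the naive pairing touches $2^v\cdot 4^c\le 4^{k+1}$ triples $(\alpha,S_1,S_2)$, and the factor $l$ enters when updating a clause's satisfaction status, giving $O(4^k l N)$ overall as claimed.
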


\begin{proposition}
 Given a Boolean state, let $k$ be the treewidth and $c$ the number of COPY-tensors. Then $k\le c$.
\end{proposition}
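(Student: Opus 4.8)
The plan is to build a tree decomposition of the Boolean state $X$ out of a tree decomposition of the network obtained by deleting its $c$ COPY-tensors, and to control the width by the standard treewidth deletion inequality
\[
\tn{tw}(G) \le \tn{tw}(G \setminus S) + |S|,
\]
valid for any vertex set $S$: take an optimal decomposition of $G \setminus S$ and adjoin $S$ to every bag. Here $G$ is the graph underlying $X$ and $S$ is the set of all $c$ COPY-tensors. This is exactly the deletion operation already exploited by the algorithm preceding Theorem \ref{thm:upperbound}, where removing the COPY-tensors produced the trees $X_{i_1,\dots,i_m}$, so the same structural observation should drive the treewidth bound.

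First I would identify the residual graph $G \setminus S$. In the construction of Subsection \ref{subsec:rephrasing} the COPY-tensors are the only vertices through which distinct gates communicate: each clause gate receives its variable inputs through COPY-tensors and sends its single output into the output-combining COPY-tensor. Consequently every edge of $G$ is incident to a COPY-tensor, i.e.\ $S$ is a vertex cover, so $G \setminus S$ is edgeless and $\tn{tw}(G \setminus S) = 0$. The deletion inequality then yields $\tn{tw}(X) \le 0 + c = c$, which is the claim.

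The main obstacle is making the vertex-cover claim literally correct at the boundary of the construction. Two kinds of edges threaten to escape the cover: a variable of fan-out one is attached to its unique clause without an intervening COPY-tensor, and a negated literal may appear as a unary gate sitting directly on a wire between a COPY-tensor and a clause. I would dispatch both by normalising the encoding: fold the sign of every literal into the clause tensor $\sum_{\x}|\x\rangle\langle\varphi(\x)|$ so that no separate negation vertices occur, and regard each variable wire as passing through a (possibly trivial, degree-one) COPY-tensor, which is precisely the fan-out bookkeeping already used to define $c$. Under this normalisation $S$ is a genuine vertex cover and the bound is clean. Alternatively, if one insists on leaving fan-out-one wires bare, the residual graph is a disjoint union of stars and hence has treewidth $1$, giving the slightly weaker $\tn{tw}(X) \le c+1$; so the real content of the argument is simply the observation that the COPY-tensors carry essentially all of the network's connectivity, and the delicate part is only the off-by-one bookkeeping needed to land on exactly $c$.
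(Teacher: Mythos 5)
Your route is genuinely different from the paper's. The paper bounds the treewidth by noting that, once the degree-one attachments (read-once variables, the $\langle 0|+\langle 1|$ tensors on variable wires) are set aside as leaves, the remaining network is a subgraph of the complete bipartite graph $K_{n',m}$ with the $n'=c$ nontrivial COPY-tensors on one side and the $m$ clauses on the other; it then invokes subgraph monotonicity, $\tn{tw}(K_{n',m})=\min(n',m)$, and the fact that attaching leaves does not raise treewidth. That argument actually yields the slightly stronger bound $\min(c,m)$. You instead use the deletion inequality $\tn{tw}(G)\le \tn{tw}(G\setminus S)+|S|$ with $S$ the set of COPY-tensors, which is more elementary (no appeal to the treewidth of complete bipartite graphs) and dovetails nicely with the cutting procedure used before Theorem \ref{thm:upperbound}. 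Both proofs rest on the same structural observation that the COPY-tensors separate the clauses from one another.

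There is, however, a genuine problem with how you land on exactly $c$. In this paper a fan-out-one variable carries \emph{no} COPY-tensor --- this is essential, since read-once formulas are precisely the networks with $c=0$, and the paper's own proof identifies $c$ with the number $n'$ of variables appearing in more than one clause. When you ``regard each variable wire as passing through a (possibly trivial, degree-one) COPY-tensor,'' you silently replace $c=n'$ by $n$, so your vertex-cover argument then proves only $\tn{tw}\le n$, which is strictly weaker than the claim whenever most variables are read once; and your honest fallback gives $c+1$, not $c$. The repair is to borrow the paper's leaf-stripping step rather than to renormalize: first discard all degree-one vertices (which does not change the treewidth of the nondegenerate core), observe that in the leafless core every remaining edge joins a COPY-tensor to a clause (the degree-$m$ output combiner, contracted with $\ket{1}$, factors into separate $\bra{1}$ leaves and so disappears at this stage), so $S$ is a vertex cover of the core and your deletion inequality gives $\tn{tw}(\text{core})\le 0+c$; reattaching the leaves then yields $\tn{tw}\le c$ exactly. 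With that adjustment your argument is a valid alternative proof.
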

\begin{proof}
 If is well known that if $H$ is a subgraph of $G$, then $\tn{tw}(H)\le\tn{tw}(G)$. If we place the tensor $\langle0|+\langle1|$ on each variable wire and then compose with the COPY-tensors to get a tensor network $T$, a variable will correspond to a leaf if and only if it appears in exactly one clause. The treewidth of $T$ with $n$ variables,  $n'$ of which are not leaves, and $m$ clauses, is at most the treewidth of $K_{n',m}$, which is known to be $\tn{min}(m,n')$. This is because adding leaves to a graph does not increase the treewidth. On the other hand, $c=n'$.
\end{proof}

Note that if the number of variables is large with respect to the number of clauses, then $c$ may be much larger than the treewidth, $k$. However, the algorithms are comparable as long as $c$ is bounded by $2^{k}$ by Theorems \ref{thm:upperbound} and \ref{thm:treewidth}. We note that if the clause to variable ratio is extremely small or large, the tree width is small and the tensor network can be contracted efficiently.  In the critical case when $c\approxeq k$, our algorithm runs \emph{exponentially faster} in $k$. There is the added advantage that, unlike treewidth, calculating $c$ is not $\NP$-complete. As such, $c$ is an attractive estimate for the complexity of the counting problem. However, the tradeoff is that it is a cruder measurement.

\section{R{\'e}nyi Entropy and Complexity}\label{sec:renyi}

 R{\'e}nyi entropies \cite{Renyi2} have recently been proposed as an
indicator of the complexity of counting problems \cite{CM13}.  The R{\'e}nyi
entropies are attractive as they can be viewed
as providing a basis to express entanglement monotones
\cite{biamonte2012invariant} and
relate to the free-energy of a physical system \cite{2011arXiv1102.2098B}. 
We look at the question pertaining to the use of R{\'e}nyi entropies as a measurement of counting complexity. As we are interested in properties of tensor networks as measures of complexity, a natural question is the efficiency of determining the R{\'e}nyi entropy of a given state. Many proposed measures of complexity are difficult to determine. As we noted in Section \ref{sec:algorithm}, the number of COPY-tensors can be easily computed, whereas treewidth cannot. Another issue arises from the fact that  R{\'e}nyi entropies are defined for physical states, which by Theorem \ref{thm:penrose} does not include non-satisfiable instances of \#SAT.

In Section \ref{sec:graphical}, we described how to assign a Hamiltonian to a Boolean state. We choose a partition for the rows and columns into two disjoint subsystems, $A$ and $B$. This is called a \emph{bipartition} of the Hamiltonian, denoted $A:B$. The R{\'e}nyi entropy of order $q$ with respect to a bipartition $A:B$ is defined to be
\begin{equation*}
 H_q^{AB} (\rho) = \frac{1}{1-q} \ln \Tr \{\rho^q\} = \frac{1}{1-q} \ln \sum_i
\lambda_i^q, 
\end{equation*}
\begin{equation*}
 \rho = \frac{1}{Z_0} \Tr_{A}\{ \ket{\psi}\bra{\psi} \}.
\end{equation*}
Here, $Z_0=\Tr\{|\psi\rangle\langle\psi|\}$. The case $q\to0$ gives the rank of
the bipartition and the positive sided limit $q
\rightarrow 1$ recovers the familiar von Neumann Entropy $H^{AB}_{q\rightarrow 1}
(\rho) = - \Tr\{ \rho \ln \rho\}$.  

In terms of the counting problem, note that
the state $|\psi\rangle$ could be given by the zero vector, in which case it is not
possible to define the state $\rho$: so more generally we can consider the
unnormalized variant of $\rho$, which is defined as  $\rho =
\Tr_{A}\{\ket{\psi}\bra{\psi}\}$.
This has the following computational significance. 

\begin{theorem}[ R{\'e}nyi entropy implies satisfiability]\label{thm:renyi1}
For any bipartition, deciding if the R{\'e}nyi entropy is defined is NP-hard.
\end{theorem}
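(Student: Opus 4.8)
The plan is to reduce SAT directly to the problem of deciding whether the R{\'e}nyi entropy is defined, exploiting the observation already flagged in the text that the \emph{only} obstruction to forming the entropy is the vanishing of its normalization. First I would pin down precisely what ``defined'' means. The R{\'e}nyi entropy $H_q^{AB}(\rho)$ is assembled from the normalized reduced density matrix $\rho = \frac{1}{Z_0}\Tr_A\{\ket{\psi}\bra{\psi}\}$, where $Z_0 = \Tr\{\ket{\psi}\bra{\psi}\} = \|\psi\|^2$. For any fixed order $q$ (with the $q\to 1$ case read off as the von Neumann entropy) and any nonzero $\rho$, the quantities $\Tr\{\rho^q\}$, its logarithm, and the prefactor $\frac{1}{1-q}$ are all perfectly well-defined; the sole way the whole expression fails to make sense is that one cannot build $\rho$ in the first place, which happens exactly when $Z_0 = 0$, i.e.\ when $\psi$ is the zero vector. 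Crucially, $Z_0 = \|\psi\|^2$ is independent of how the system is split, so the predicate ``$H_q^{AB}$ is defined'' is equivalent to ``$\psi \neq 0$'' for \emph{every} bipartition $A:B$, which is what lets the theorem assert hardness for any bipartition.

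Next I would invoke the counting correspondence of Remark~\ref{theorem:3-SAT}: for a Boolean state $\psi_f$ built from a SAT instance $f$ by the construction of Subsection~\ref{subsec:rephrasing}, one has $\|\psi_f\|^2 = \#f$, the number of satisfying assignments. Hence $\psi_f \neq 0$ if and only if $f$ is satisfiable. Combining this with the previous step, for any bipartition the R{\'e}nyi entropy of $\psi_f$ is defined precisely when $f$ is satisfiable.

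The reduction is then immediate: given an arbitrary SAT instance $f$, the Boolean state $\psi_f$ is produced in polynomial time, and any oracle deciding whether the R{\'e}nyi entropy (with respect to some chosen bipartition) is defined thereby decides satisfiability of $f$. Since SAT is NP-hard, so is deciding whether the R{\'e}nyi entropy is defined. I do not expect a genuine obstacle here, since the argument is essentially a definitional reduction; the one point that merits care, and which I would state explicitly, is the verification that the normalization $Z_0$ is the unique source of undefinedness and that it is bipartition-independent, so that the single NP-hard core, namely satisfiability, is faithfully captured no matter how $A:B$ is selected.
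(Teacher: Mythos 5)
Your proposal is correct and follows essentially the same route as the paper's proof: the entropy is defined if and only if the normalization $Z_0=\|\psi\|^2$ is nonzero, which is bipartition-independent and, via the norm-counting correspondence, equivalent to satisfiability of the underlying formula. You merely make explicit the polynomial-time reduction from SAT that the paper leaves implicit.
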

\begin{proof}
Given an unnormalized Boolean state $\rho$,
\begin{equation*}
 \rho = \Tr_{A}\{ \ket{\psi}\bra{\psi} \}
\end{equation*}
for some bipartition $A:B$, $H^{AB}_q (\rho)$ takes a finite value if and only if the Boolean
state is satisfiable. Note that this is independent of choice of bipartition as $\ln \Tr \{\rho^q\}$ is undefined if $|\psi\rangle=0$. Thus determining if the R{\'e}nyi entropy is defined is $\NP$-hard. 
\end{proof}

We can then consider instead the promise problem called Unambiguous-SAT: Given a SAT instance promised to have at most one solution, determine if it is satisfiable. Unambiguous-SAT is still a hard problem.

\begin{theorem}[Valiant-Vazirani, \cite{valiant1986np}]
If there is a polynomial time algorithm for solving Unambiguous-SAT, then $\NP=\sf{RP}$.
\end{theorem}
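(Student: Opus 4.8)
The plan is to prove the reduction directly: I will show that a polynomial-time solver for Unambiguous-SAT yields a randomized polynomial-time, one-sided-error algorithm for \SAT, which places $\NP$ inside ${\sf RP}$; since ${\sf RP}\subseteq\NP$ holds trivially, this gives $\NP={\sf RP}$. The engine of the reduction is a randomized \emph{isolation} step that, starting from an arbitrary satisfiable formula, produces with non-negligible probability a formula whose set of satisfying assignments has been collapsed to a single point---in the language of Subsection~\ref{subsec:rephrasing}, a step that turns a general Boolean state $\psi_f=\sum_{\x} f(\x)\ket{\x}$ into a single basis vector $\ket{x^*}$.

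First I would fix a pairwise-independent family of hash functions realised by random affine maps over $\mathbb{F}_2$: sample $A\in\{0,1\}^{k\times n}$ and $b\in\{0,1\}^{k}$ uniformly and set $h(\x)=A\x+b$. The constraint $h(\x)=0$ is a conjunction of $k$ parity conditions on $x_1,\dots,x_n$, each of which can be appended to $f$ as additional clauses (introducing the usual auxiliary variables to express XOR in CNF), yielding a new \SAT instance $f'$ whose satisfying assignments are exactly those $\x$ with $f(\x)=1$ and $h(\x)=0$. By construction $f'$ is unsatisfiable whenever $f$ is, so this step never creates solutions out of nothing.

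The core lemma---and the step I expect to be the main obstacle---is the Valiant--Vazirani isolation estimate: if $S=f^{-1}(1)$ is nonempty and $k$ is chosen so that $2^{k-2}\le|S|\le 2^{k-1}$, then with probability at least $1/8$ over the choice of $h$ there is \emph{exactly one} $\x\in S$ with $h(\x)=0$. I would prove this by the second-moment method applied to the count $Z=|\{\x\in S:h(\x)=0\}|$: pairwise independence gives $\mathbb{E}[Z]=|S|2^{-k}$ and controls $\mathbb{E}[Z(Z-1)]$, and inclusion--exclusion then separates $\Pr[Z=1]$ from $\Pr[Z\ge 2]$, the chosen window on $|S|$ being exactly what makes both the ``at least one'' and ``at most one'' events simultaneously likely. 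Since $|S|$ is unknown in advance, I would pick $k$ uniformly from $\{1,\dots,n+1\}$; the correct window is hit with probability at least $1/(n+1)$, so a single trial isolates a unique solution with probability $\Omega(1/n)$.

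Finally I would assemble the ${\sf RP}$ algorithm: on input $f$, run $O(n)$ independent trials of (sample $k,h$; build $f'$; query the Unambiguous-SAT oracle on $f'$) and accept iff some trial's oracle accepts. If $f$ is unsatisfiable every $f'$ is unsatisfiable, hence satisfies the uniqueness promise and is correctly rejected, so the algorithm never accepts. If $f$ is satisfiable, each trial independently produces a uniquely-satisfiable $f'$---on which the oracle is obliged to accept---with probability $\Omega(1/n)$, so $O(n)$ trials push the acceptance probability above $1/2$. The only point needing care is that trials violating the promise (a multi-solution $f'$) may elicit arbitrary oracle answers; but a spurious rejection merely wastes a trial, while a spurious acceptance can only occur for a genuinely satisfiable $f'$, so neither breaks the one-sided guarantee. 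This yields $\SAT\in{\sf RP}$ and therefore $\NP={\sf RP}$.
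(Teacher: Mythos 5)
The paper does not prove this statement at all: it is imported verbatim as a known result with a citation to Valiant--Vazirani, so there is no in-paper argument to compare against. Your sketch is the standard proof from that cited source --- the pairwise-independent affine hashing, the second-moment isolation lemma on the window $2^{k-2}\le|S|\le 2^{k-1}$, random choice of $k$, and $O(n)$-fold amplification with the correct handling of promise-violating instances --- and it is sound as stated.
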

\begin{theorem}[R{\'e}nyi Entropy Reduction to Unambiguous-SAT]\label{thm:renyi2}
Suppose you have a Boolean state with the promise that the Renyi entropies are all undefined or zero. Then deciding if the R{\'e}nyi Entropies are undefined or zero is as hard as Unambiguous-SAT.
\end{theorem}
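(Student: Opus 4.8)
The plan is to exhibit a polynomial-time reduction from Unambiguous-SAT to the decision problem in the statement, so that any efficient solver for the latter immediately solves the former. Given an Unambiguous-SAT instance, that is, a formula $f$ promised to have at most one satisfying assignment, I would construct the associated Boolean state $\ket{\psi_f}=\sum_{\x}f(\x)\ket{\x}$ using the standard tensor-network encoding of Subsection \ref{subsec:rephrasing}. This construction is polynomial in the size of $f$, so it suffices to show two things: that the resulting state $\ket{\psi_f}$ always lies in the promised class (all R\'enyi entropies undefined or zero), and that its entropy status encodes the satisfiability of $f$.

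The heart of the argument is a case analysis driven by the Unambiguous-SAT promise. If $f$ has no satisfying assignment, then $\ket{\psi_f}$ is the zero vector, and by Theorem \ref{thm:renyi1} every R\'enyi entropy $H_q^{AB}$ is undefined. If instead $f$ has exactly one satisfying assignment $\x^*$, then $\ket{\psi_f}=\ket{\x^*}$ is a single computational basis state. Since $\ket{\x^*}=\ket{\x^*_A}\otimes\ket{\x^*_B}$ factorizes across any bipartition $A:B$, the reduced operator $\rho=\Tr_A\{\ketbra{\x^*}{\x^*}\}=\ketbra{\x^*_B}{\x^*_B}$ is a rank-one projector with a single nonzero eigenvalue equal to $1$. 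Hence $\Tr\{\rho^q\}=1$ and $H_q^{AB}(\rho)=0$ for every order $q$ and every bipartition. Thus in both cases the hypothesis that all R\'enyi entropies are undefined or zero holds automatically, so the reduction never leaves the promised class.

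Combining the two cases, deciding whether the R\'enyi entropies of $\ket{\psi_f}$ are undefined or zero is precisely deciding whether $f$ is unsatisfiable or satisfiable. An efficient algorithm for the entropy problem would therefore decide Unambiguous-SAT, establishing the claimed hardness and, via the Valiant--Vazirani theorem, connecting it to the $\NP=\sf{RP}$ question.

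The point to get right is the single-solution case: I must verify that a computational basis state has Schmidt rank one across \emph{every} cut, so that the entropy vanishes uniformly both in the order $q$ and in the choice of bipartition, matching the blanket promise in the statement. It is also worth emphasizing that only the forward direction of the correspondence is used; I would not claim the converse, since product states with several free variables (for example $\ket{+}^{\otimes k}$) also carry zero entropy while having many solutions, yet such states never arise as the image of an Unambiguous-SAT instance, which by promise contributes at most one basis vector.
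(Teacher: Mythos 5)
Your proof is correct, and it is in fact more careful than the one the paper gives. The paper argues in the opposite direction: it asserts that the R\'enyi entropies being zero for all bipartitions forces the reduced density operator to be a product state, ``which implies that the corresponding Boolean state has a unique solution,'' and concludes that the entropy promise problem \emph{is} Unambiguous-SAT. That converse implication is false as stated --- as you yourself point out, $\ket{+}^{\otimes k}$ (the Boolean state of a trivially true formula) is a product state with vanishing R\'enyi entropy across every cut yet has exponentially many satisfying assignments. You instead run the reduction in the logically correct direction for a hardness claim: starting from an Unambiguous-SAT instance $f$, you build $\ket{\psi_f}$, verify that the at-most-one-solution promise places the state inside the entropy promise class (zero vector gives undefined entropies; a single basis state factorizes across every bipartition and gives $H_q^{AB}=0$ for all $q$ and all cuts), and observe that distinguishing the two cases decides $f$. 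This buys you a sound proof of the stated hardness without needing the dubious converse, at the cost of not establishing the paper's implicit stronger claim that the two promise problems coincide. The only point worth adding is a sentence confirming that the standard encoding of Subsection \ref{subsec:rephrasing} produces $\ket{\psi_f}$ (or a description of it sufficient to pose the entropy question) in polynomial time, which you assert but could make explicit.
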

\begin{proof}
 The R{\'e}nyi entropies are zero if and only if the density operator is a product state, which implies that the corresponding Boolean state has a unique solution. It is undefined if and only if it has no solution as previously discussed. Therefore, if we are promised the R{\'e}nyi entropies are zero or undefined for all partitions, we are promised that there is at most one solution. This is precisely Unambiguous-SAT.
\end{proof}

What we see is that while R{\'e}nyi entropies may encode information about the complexity of a counting problem, they are unfortunately not easy to compute. Furthermore, they may not even be defined and determining this is computationally challenging.

\section{Discussion}\label{sec:discussion}
The tensor network approach offers several
straightforward proofs related to counting problems---including a simple proof
of a the $r$,1 instance of the Dubois-Tovey
theorem. In addition to this, a core result of this study was presenting a
tensor contraction algorithm for $n$-variable \#SAT instances with complexity
$O(n^{O(1)}2^c)$, where $c$ is the number of COPY-tensors in the
network, each with polynomial degree.  This provides a class of counting problem instances which can be
solved
efficiently when their tensor network expression has at most $O(\log c)$ COPY-tensors.

One can view this
work as a continuation of the ideas of classically simulating a quantum
algorithm \cite{JBCJ13} to solve a given computational task, which might provide
methods which are both quick to develop and scale favorably. Our results were all inspired and conducted in the language
of tensor networks.

\section*{acknowledgments}
We thank Tobias Fritz and Eduardo Mucciolo for providing feedback.  JDB acknowledges financial support from the Fondazione Compagnia di San Paolo
through the Q-ARACNE project and the Foundational Questions Institute (FQXi,
under grant FQXi-RFP3-1322).  JM and JT acknowledges the NSF (under grant NSF-1007808)
for financial support.

 \bibliographystyle{unsrt}
\bibliography{qc}

\newpage 
\appendix

\section{Properties of Tensor Contraction}\label{sec:tensor-defs} 

Although the graphical manipulation of tensors (e.g.~algebraic rewrite rules)
has been axiomatized by several communities, 
basic theorems and techniques of contraction are less systematically described. 
Networks have seen little effort toward their systematic formulation.  
Hence we present the minimal properties of tensor contraction we relied on to
provide the main results in the paper.  

We assume that the reader has some basic familiarly with tensor networks and thus
focus on the properties of contraction properties which we used.  This section should hopefully be self
contained.  For additional background information, see
e.g.~\cite{biamonte2012invariant}.	

The {\em value} of a tensor network is obtained by performing all possible
partial contractions diagrammed by the network (the order
of contraction does not affect the value). 
If we allow one or more of the tensors in the network to vary, and the network
has no dangling wires, we obtain a multilinear function from the direct sum over
vertices of the spaces $\bigotimes_{e\in\mathcal{N}(v)}{V(e)}$ to the complex
numbers. When there are dangling wires, the multilinear function's codomain
is the tensor product of the vector spaces corresponding to the dangling edges.

We denote such a function by $\CC$ when the tensor network defining it is
understood.  Then multilinearity can be rephrased as follows.
\begin{remark}[Linearity of tensor contraction] 

Tensor contraction is linear in its component tensors provided each appears only
once: $\CC\{A_{v_1}, \dots, \alpha A_{v_k}+A_{v_k}', \dots, A_{v_n}\} = \alpha
\CC\{A_{v_1}, \dots, A_{v_k}, \dots, A_{v_n}\} + \CC\{A_{v_1}, \dots, A_{v_k}',
\dots, A_{v_n}\}.$
and then $A\mapsto kA$ we readily find that the contraction becomes
$\CC\{A'\}+\CC\{B\}$ and $k\cdot \CC\{A\}$ respectively.  
\end{remark}

Several of our results rely on transforming  contracted sub-networks, into a
sum-over products, and vice versa.  
In other words, removing a contracted tensor (or part of a contracted tensor)
and replacing it by a sum of contracted tensors, where each term contains
contractions that are multiplied together.  The stage for this can be set by
stating several properties of tensor contractions.  We will use these properties
to reshape networks, such that their transformed geometry is known to be
efficiently contractible.

\begin{theorem}[COPY-tensors as a resolution of identity]\label{theorem:tnres}
The following sequence of graphical rewrites hold.  
\begin{center}
 \includegraphics[width=0.65\textwidth]{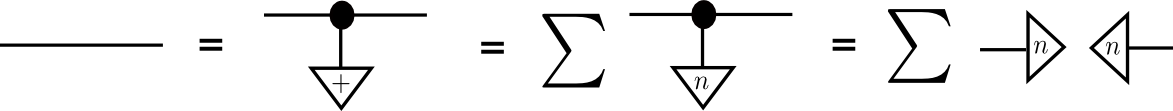} 
\end{center}
\end{theorem}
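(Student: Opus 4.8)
The plan is to establish each step of the diagrammatic sequence by passing to coordinates and verifying equality of the underlying multilinear maps, using that two tensors over $\C$ agree precisely when they agree on all computational-basis inputs. First I would record the explicit forms of the two ingredients appearing in the figure: the degree-$k$ COPY-tensor, which by definition is $\sum_{b\in\{0,1\}}|b\rangle^{\otimes k}$ read with its legs split into inputs and outputs as the diagram dictates, and the single-wire resolution of identity $\eye = |0\rangle\langle 0| + |1\rangle\langle 1|$. The key lemma I would isolate first is the \emph{collapsing property}: contracting any one leg of a COPY-tensor against a basis bra $\langle b|$ fixes every remaining leg to the same value $b$, i.e. $\langle b|_{(j)}\,\delta^{(k)} = |b\rangle^{\otimes(k-1)}$. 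This is immediate from orthonormality $\langle b|b'\rangle = \delta_{bb'}$, and it is the engine behind every rewrite in the sequence.

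With the collapsing property in hand, I would read the diagram from left to right, justifying each arrow in turn. To expand a wire, I insert $\eye = \sum_{b} |b\rangle\langle b|$, which is the resolution of identity and requires no computation. To pass from a contracted COPY sub-network to a sum over products, I apply the collapsing property on the cut leg: summing over the basis value $b$ carried by that leg replaces the single connected tensor by two disconnected terms, one for $b=0$ and one for $b=1$. Linearity of contraction (the preceding Remark) then lets me pull this finite sum outside the surrounding network, so that the whole diagram becomes a sum of product diagrams matching the right-hand side. Order-independence of contraction, recalled at the start of this appendix, guarantees that performing these leg-wise manipulations in any convenient order yields the same value.

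The main obstacle I anticipate is bookkeeping rather than mathematical depth: I must track carefully which legs are inputs (bras) and which are outputs (kets) at each stage, since the COPY-tensor is no longer symmetric once its legs are typed, and I must ensure the rewrite holds for arbitrary degree $k$ rather than only the small cases drawn in the figure. I would handle the general degree either by a short induction on $k$, peeling off one leg at a time using the collapsing property, or by directly evaluating both sides on an arbitrary basis input $|\x\rangle$ and observing that each returns the same indicator of the event ``all contracted indices are equal.'' Once the basis evaluations agree for every $\x$, the tensors coincide and the full sequence of rewrites follows.
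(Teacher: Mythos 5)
Your argument is correct and is essentially the paper's own: the paper likewise verifies that the COPY-tensor contracted with its unit $\ket{+}$ yields $\sum_n |n\rangle\langle n| = \mathrm{id}$ on the wire (your ``collapsing property'' is exactly the $\langle i|n\rangle$ delta used there), and then invokes linearity of contraction to pull the finite basis sum outside the network, giving the sum over products. The extra bookkeeping you propose for leg typing and general degree $k$ is sound but not needed beyond what the paper's one-line computation already establishes.
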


\begin{proof}
In the above figure, on the left, we abstractly depict a otherwise arbitrary
tensor network, by showing only one single wire.  The unit for the COPY-tensor
is the plus state $\ket{+}=\sum{|n\rangle}$. Contracting $\ket{+}$ against the
COPY-tensor therefore gives $\sum_{i,n}{|i\rangle\langle i|\langle
i|n\rangle}=\sum_{n}{|n\rangle\langle n|}$ which is precisely the identity.  The
last equality follows directly from the linearity of tensor contraction.
\end{proof}

This procedure can be iterated over many wires. In particular, if we have a gate
in a tensor network, we can remove it from the network by the above method
resulting in a nested sum of tensor networks. Of course the number of tensor
networks in the summand grows exponentially in the number of wires of the gate
removed. 

Some counting algorithms are only polynomial time on planar graphs, and this
cutting procedure can be used to deal with handles, yielding a running time
factor exponential in the genus of the graph \cite{bravyi2009contraction}.

\begin{corollary}[Contraction to sum of products transform]\label{cor:c2sop}
 Given a tensor $\Gamma^{...ijk}_{~~~~lmn...}$ in a fully contracted network
(e.g.\ a network without open legs), the following graphical identities, which are equal, transforms
the contraction, to a sum over products.  
\begin{center}
 \includegraphics[width=0.65\textwidth]{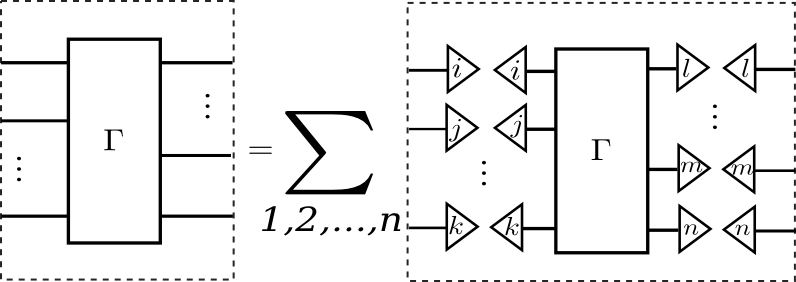} 
\end{center}
The comprising rectangle above is meant as an abstraction depicting a fully
contracted but otherwise general tensor network.  
\end{corollary}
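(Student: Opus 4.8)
The plan is to derive this purely from Theorem \ref{theorem:tnres} together with the Remark on linearity of tensor contraction. The one fact I need is that on any wire the identity operator admits the computational-basis resolution $\mathrm{id}=\sum_n\ketbra{n}{n}$; by Theorem \ref{theorem:tnres} this is exactly what is produced by inserting a COPY-tensor capped with its unit $\ket{+}$, so such an insertion leaves the value of any contraction unchanged. First I would apply this resolution to each wire incident to $\Gamma$, i.e. the wires carrying the indices $i,j,k,\dots$ and $l,m,n,\dots$. Each insertion cuts the corresponding wire at a point and places $\sum_n\ketbra{n}{n}$ there.

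Since each inserted tensor occurs only once, the linearity Remark lets me pull every one of these sums to the front, so that the single contraction becomes $\sum_{i,j,k,\dots}\sum_{l,m,n,\dots}$ of the contraction in which every cut wire now carries a fixed basis state. I would then read off the two sides of each cut: on the $\Gamma$ side the incident legs are capped by fixed basis bras and kets, so $\Gamma$ contributes only its scalar component $\Gamma^{\dots ijk}_{\ \ \ \ lmn\dots}$, while on the opposite side the matching wires of the remainder of the network are capped by the same fixed basis states. Thus each summand is the product of the scalar $\Gamma^{\dots ijk}_{\ \ \ \ lmn\dots}$ with the value of the remaining capped network, which is precisely a sum over products; the two displayed diagrams are then just the two bracketings of this same expression. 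For a COPY-tensor, whose expansion is diagonal, the double sum collapses to a single sum in which all capped legs share one value $n$, recovering the ``two summands'' $X_0$ and $X_1$ used in Section \ref{sec:algorithm}.

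The only real subtlety I anticipate is the factorization implicit in the phrase ``sum over products'': once the cut wires are capped by fixed basis states, I must argue that the remainder of the network separates into independent pieces whose contractions multiply, rather than remaining a single joint contraction. This rests on the defining property of the \emph{value} of a tensor network recalled at the start of this appendix, namely that it is obtained by performing all partial contractions in an order-independent way, from which the value of a network whose two parts share no wire equals the product of the two parts' values. I would state this factorization explicitly and invoke it to conclude that each summand is genuinely a product, completing the identity.
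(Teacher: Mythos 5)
Your argument is exactly the one the paper intends: the corollary is left unproved there beyond the remark that the resolution of identity of Theorem \ref{theorem:tnres} ``can be iterated over many wires,'' and your proposal simply carries out that iteration, pulls the sums out by the linearity remark, and notes the factorization of each capped summand into the component $\Gamma^{\dots ijk}_{\ \ \ \ lmn\dots}$ times the value of the disconnected remainder. Your added observations --- the explicit appeal to order-independence of contraction to justify that disjoint pieces multiply, and the collapse of the multi-index sum to a single sum for the diagonal COPY-tensor, recovering $X_0$ and $X_1$ --- are correct and if anything more careful than the paper's own treatment.
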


As mentioned, the contraction to sum of products transform in Theorem
\ref{cor:c2sop} as well as the tensor network resolution of
identity in Theorem \ref{theorem:tnres} will be useful to transform networks
into those known to be efficiently contractible. 

\begin{theorem}[The Cauchy-Schwarz inequality]\label{thm:ineq}
Given a contracted network and a partition into two halves $x$, and $y$. 
Writing the contraction as 
$\CC\{x,y\}$ the following inequality holds.  
$$
\CC\{x,y\}^2\leq \CC\{x,x\}\cdot \CC\{y,y\}
$$ 
with graphical depiction.  
\begin{center}
 \includegraphics[width=0.6\textwidth]{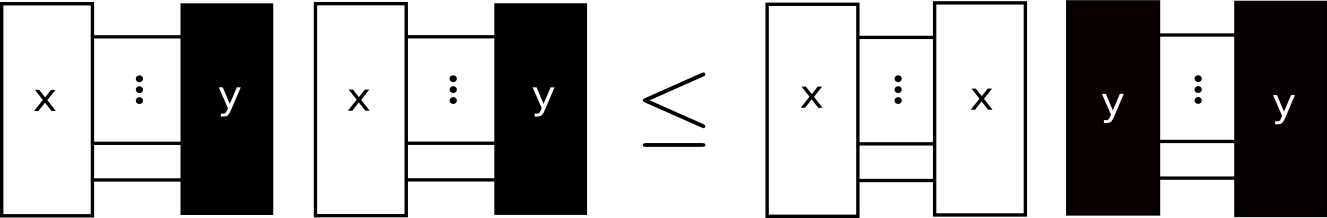} 
\end{center}
\end{theorem}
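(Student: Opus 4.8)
The plan is to reduce the statement to the ordinary Cauchy--Schwarz inequality for vectors in a finite-dimensional inner product space. The partition of the fully contracted network into halves $x$ and $y$ amounts to choosing a set of wires to cut; severing these wires turns each half into a tensor carrying one dangling wire for each cut edge. Let $V = \bigotimes_e V(e)$ be the tensor product of the vector spaces attached to the cut edges. Then the half $x$, read as a tensor with its cut wires left open, determines a vector $\x \in V$, and likewise $y$ determines $\y \in V$.

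The next step is to identify each of the three scalars in the statement with a pairing in $V$. Re-contracting the cut wires that join $x$ to $y$ is, by the definition of the value of a tensor network, exactly the pairing of $\x$ against $\y$ along $V$; that is, $\CC\{x,y\} = \langle \x, \y\rangle$. Gluing a copy of $x$ to itself along the same cut gives $\CC\{x,x\} = \langle \x, \x\rangle$, and similarly $\CC\{y,y\} = \langle \y, \y\rangle$. Since the tensors arising from Boolean states have real entries, this pairing is the standard real inner product on $V$, so that $\langle \x,\x\rangle$ and $\langle \y,\y\rangle$ are genuine squared norms.

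With these identifications in place, the inequality $\CC\{x,y\}^2 \le \CC\{x,x\}\cdot\CC\{y,y\}$ becomes precisely $\langle\x,\y\rangle^2 \le \langle\x,\x\rangle\,\langle\y,\y\rangle$, which is the classical Cauchy--Schwarz inequality applied to $\x,\y \in V$; equality holds exactly when $\x$ and $\y$ are linearly dependent, the degenerate situation the graphical rewrite records.

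The main obstacle is the bookkeeping required to make the identification $\CC\{x,y\} = \langle\x,\y\rangle$ fully rigorous. One has to check that cutting the network along the partition and then re-gluing reproduces the original contraction, which the order-independence of contraction guarantees, and that the self-contractions $\CC\{x,x\}$ and $\CC\{y,y\}$ pair each half against a faithful copy of itself over the \emph{same} cut, so that all three scalars live over one common space $V$. Once the pairing is shown to be symmetric and positive semidefinite on the relevant real span, the analytic content is supplied entirely by the ordinary Cauchy--Schwarz inequality and nothing further is needed.
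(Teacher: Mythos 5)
Your proposal is correct and follows essentially the same route as the paper, which simply invokes the classical Cauchy--Schwarz inequality via linearity of contraction; you merely make explicit the identification of each half with a vector in the tensor product of the cut-edge spaces. Your added caveat that the pairing must be positive semidefinite (e.g.\ for real-valued Boolean tensors) is a point the paper only gestures at by saying ``the right side is either real valued, or we take the modulus,'' so your version is, if anything, slightly more careful.
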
 

\begin{proof}
 By the linearity of tensor contraction, we arrive at an abstract form of the 
 Cauchy-Schwarz inequality, with equality in the contraction if and only if $x =
\alpha \cdot y$.  This leads 
 directly to the concept of an angle between tensors, 
 $$ 
 \cos \theta_{xy} = \frac{\CC\{x,y\}^2}{\CC\{x,x\}\cdot \CC\{y,y\}}
 $$
 where the right side is either real valued, or we take the modulus.  
\end{proof}

\end{document}